\documentclass[11pt,a4paper]{article}

\usepackage[margin=25mm]{geometry}
\usepackage{amsmath,amssymb,amsthm}
\usepackage{mathtools}
\usepackage{booktabs,array}
\usepackage{enumitem}
\setlist[itemize]{label=$\bullet$}
\usepackage{cite}
\usepackage{float}
\usepackage{longtable}
\usepackage{tikz}
\usetikzlibrary{arrows.meta}
\usepackage{xurl}
\usepackage[hidelinks]{hyperref}
\usepackage[capitalise,nameinlink]{cleveref}

\newtheorem{theorem}{Theorem}[section]
\newtheorem{proposition}[theorem]{Proposition}
\newtheorem{lemma}[theorem]{Lemma}
\newtheorem{corollary}[theorem]{Corollary}
\theoremstyle{definition}
\newtheorem{definition}[theorem]{Definition}
\newtheorem{example}[theorem]{Example}
\newtheorem{assumption}[theorem]{Assumption}
\theoremstyle{remark}

\newcommand{\FOUR}{\mathbf{4}}
\newcommand{\Yv}{\mathsf{Y}}
\newcommand{\Nv}{\mathsf{N}}
\newcommand{\Uv}{\mathsf{U}}
\newcommand{\Bv}{\mathsf{YN}}
\newcommand{\lek}{\le_{k}}
\newcommand{\leqt}{\le_{t}}
\newcommand{\req}{\mathrm{req}}
\newcommand{\Stab}{\mathrm{Stab}}
\newcommand{\Phiw}{\Phi^{\simeq}}
\newcommand{\cX}{\mathcal{X}}
\newcommand{\Act}{\mathrm{Act}}
\newcommand{\Sanc}{\mathrm{Sanc}}
\title{Epistemic Disturbance in the Graph Model for Conflict Resolution:\\
State-Preserving Actions, Four-Valued Assessments,\\
and the Distinction between Capability and Intention}

\author{Yukiko Kato\\[2pt]
\small Lynx Technologies Inc., Tokyo, Japan\\
\small Institute of Science Tokyo, Tokyo, Japan\\
\small \texttt{ykato@lynxjp.com}}

\date{Draft}

\begin{document}
\maketitle

\begin{abstract}
In the graph model for conflict resolution (GMCR), a decision maker (DM)
either moves the conflict to another state or does nothing. The basic
definitions leave inaction implicit: remaining at a state is not a move and
receives no analysis of its own, so every action that leaves the state
unchanged is treated as doing nothing. Many actions in real conflicts are
neither moves nor inaction: announcements, exercises, leaks and selective
disclosures leave the state of the conflict unchanged but change what other
DMs believe about the moves available to whom and about the moves each DM
would want to make. We introduce such state-preserving actions into GMCR by
augmenting states with the DMs' epistemic states. A physical move changes
the physical state; a state-preserving action changes only the epistemic
state; inaction is the absence of a transition. A state-preserving action
generates evidence in each observer through an observer-specific
interpretation map, so that the same action may be read differently by
different DMs. To represent assessments that may lack evidence or contain
conflicting evidence, and in which evidence accumulates, we build on a
four-valued extension of GMCR proposed in the author's earlier work. We show
that evidence for the availability of a move can only enable perceived
moves and evidence against can only disable them, that two of the four
reduction operators ignore one kind of evidence entirely, and that
contradictory assessments are absorbing under monotone accumulation of
evidence. Combined with the monotonicity of stability in move sets, these
facts determine the direction in which any state-preserving action moves a
DM's stability judgements, and characterize the channels through which an
action can enable provocation or deterrence. Capability assessments affect
all sanction-based stability concepts and, on the focal DM's own side, Nash
stability as well, whereas intention assessments affect only sequential
stability. Hedging between two candidate
types weakly expands the set of states an observer judges sequentially
stable if the observer reads contradictory evidence as credibility, and
weakly shrinks it otherwise. The 1995 negotiation that unified the two
competing DVD formats illustrates the distinction: general metarationality
cannot distinguish the phases of the negotiation, because the technical
working group of computer companies could always sanction, whereas
sequential stability, which asks whether it would, can.
\end{abstract}

% ===============================================================
\section{Introduction}
% ===============================================================

A graph model records the state of a conflict as a combination of options
taken by the decision makers (DMs), and a move as a change of that
combination \cite{KHF1987,FHK1993}. In the basic definitions, a move is an
arc between two distinct states, and the reachable lists from which
stability is analysed exclude the current state; excluding it is a standard
convention in the literature \cite{RegoVieira2022}. Remaining at a state is
therefore not a move. It is not examined in its own right either: it enters
the definitions only implicitly, as what a DM does when it does not move,
and stability analysis asks whether a DM has reason to leave a state, never
what a DM does while it stays. Everything that happens without a change of
state is thereby treated as one and the same: doing nothing.

This implicit treatment of inaction is the starting point of this paper.
Many actions in real conflicts leave the combination of options unchanged
and still matter: a declaration that a sanction is ready, a
military exercise, a leak suggesting that a threat is empty, a private
signal to one party but not another, a deceptive signal about future capability
or intention. What such an action changes is
not the state of the conflict but what other DMs believe about which moves
are available and which moves the others would want to make.

This paper introduces such actions into the graph model. The central
distinction is between three things that the physical state alone cannot
tell apart. Let $s$ be a physical state and $\kappa$ a profile of the DMs'
epistemic states. Then
\[
\begin{array}{lll}
(s,\kappa)\longrightarrow(t,\kappa) & s\neq t & \text{physical move},\\[2pt]
(s,\kappa)\longrightarrow(s,\kappa') & \kappa\neq\kappa' & \text{state-preserving action},\\[2pt]
(s,\kappa) & \text{no transition} & \text{inaction}.
\end{array}
\]
Under the projection onto the physical state, a state-preserving action
looks like a self-loop $s\to s$, and in the basic definitions of the graph
model it is indistinguishable from inaction. It is not a null action; it is
null only under that projection. Making inaction explicit, as the absence of
a transition, is what allows state-preserving actions to be defined at all.

To represent the phenomena considered here, we distinguish at least three
epistemically different situations that a binary representation (available
or not) collapses: the absence of any evidence, evidence in one direction,
and conflicting evidence. We also let evidence be added without erasing what
was there before, since an announcement or a leak adds to what others know
rather than replacing it. A four-valued extension of GMCR proposed by the
author in earlier work \cite{Kato2022,KatoQCW} is a particularly economical
way to make these distinctions, because it separates evidence for from
evidence against explicitly. It uses Belnap's
four values \cite{Belnap1977,Dunn1976}: evidence only for ($\Yv$), only
against ($\Nv$), neither ($\Uv$) or both ($\Bv$). This extension is recent
and not yet widely adopted, and we do not presuppose it as the standard
representation of beliefs in the graph model. We adopt it because it makes
the situation studied here straightforward to model and its effects
explicit. Probabilities could also express uncertainty, but a single
precise probability does not in general distinguish the absence of evidence
from balanced conflicting evidence without additional structure, whereas
$\Uv=(0,0)$ and $\Bv=(1,1)$ are distinct values here; precise probabilities
also require quantitative belief information that the standard graph model
does not require.
Richer formalisms such as imprecise probabilities or belief functions can
make the distinction, at the cost of further quantitative structure. Hypergame models
represent misperception through subjective games and generally do not
separate the lack of evidence from conflicting evidence. When every assessment is $\Yv$ or
$\Nv$, the framework reduces to the standard graph model with perceived move
sets.

A state-preserving action generates evidence in an observer, and the same
action may generate different evidence in different observers. We therefore
separate the action from its effect: each observer has an
\emph{interpretation map} that turns an action into a change of its own
assessment. A DM acts on its four-valued assessments through a
\emph{reduction operator} that sends $\Uv$ and $\Bv$ to $\Yv$ or $\Nv$; the
four operators of this kind react to evidence in structurally different
ways. Our results are as follows.
\begin{itemize}[leftmargin=*]
  \item We make explicit the treatment of inaction, which the basic
        definitions of the graph model leave implicit, and distinguish it
        from state-preserving actions by augmenting states with epistemic
        states (\cref{sec:actions}). Inaction is the absence of a transition;
        a state-preserving action is a transition that changes only the
        epistemic state; the convention that remaining at a state is not a
        move is preserved.
  \item Evidence for the availability of a move can only enable perceived
        moves, and evidence against can only disable them, for all four
        operators (\cref{prop:truth}); two operators ignore one of the two
        kinds of evidence entirely (\cref{prop:sensitivity}); and the
        assessments that no further evidence can change, as far as action
        is concerned, are the contradictory ones and the definite ones that
        the operator resolves in the same way as contradiction
        (\cref{prop:invariant}).
  \item Stability is monotone in the opponents' move sets and antitone in
        the focal DM's own (\cref{lem:stabmono}). Together these facts
        determine the direction in which a state-preserving action moves a
        DM's judgement of its own stability (\cref{thm:direction,thm:credibility}),
        and give a sufficient condition for a judgement to be immune to
        every action (\cref{prop:certificate}).
  \item We define actions that enable provocation or deterrence, and show
        which operators are open to which, through which channel
        (\cref{cor:channels}).
  \item Capability assessments affect all sanction-based stability
        concepts and, on the focal DM's own side, Nash stability as well,
        whereas intention assessments affect only sequential stability
        (\cref{thm:direction,thm:credibility}, \cref{tab:dimensions}).
  \item When intention assessments derive from evidence about candidate
        types of a DM, hedging between two types makes exactly the disputed
        moves contested, and weakly expands or weakly shrinks an observer's
        set of sequentially stable states according to how the observer
        resolves contradiction (\cref{thm:hedging}).
  \item In the 1995 DVD format negotiation, general metarationality cannot
        distinguish the phases, because the working group of computer
        companies could always sanction, whereas sequential stability can,
        because it asks whether the group would (\cref{sec:dvd}).
\end{itemize}

The central result, \cref{thm:direction}, can be summarized as follows,
where ``stabilizing'' means that the set of states a DM judges stable can
only grow:
\[
\begin{array}{l|cc}
 & \text{evidence for} & \text{evidence against}\\ \hline
\text{about opponents' moves} & \text{stabilizing} & \text{destabilizing}\\
\text{about own moves} & \text{destabilizing} & \text{stabilizing}
\end{array}
\]

How DMs anticipate the evolution of the conflict under their assessments,
and how an actor can plan a sequence of state-preserving actions, is the
subject of a companion paper \cite{KatoAnticipation}; the present paper
concerns the effect of state-preserving actions on stability judgements.

% ===============================================================
\section{Related work and positioning}\label{sec:related}
% ===============================================================

\subsection{Moves and inaction in the graph model}

The graph model represents a conflict by states, the moves that each DM
controls between them, and the DMs' preferences \cite{KHF1987,FHK1993}. Moves
are arcs between distinct states, and the reachable lists used in stability
analysis exclude the current state, a convention stated explicitly in later
work \cite{RegoVieira2022}. Stability concepts ask whether a DM has an
improvement that it would take, and status quo analysis traces the
evolution of a conflict through sequences of moves \cite{Xu2010}. In both,
what a DM does without changing the state is not represented: it coincides
with doing nothing. To our knowledge, the distinction between inaction and
actions that leave the state unchanged has not been drawn in the graph
model. The categorical formulation of \cite{KatoCGMCR} makes identities
explicit as morphisms, which suggests the distinction, but does not model
what such actions change.

\subsection{Misperception and unawareness}

Misperception of options and preferences in the graph model has been
studied through hypergames \cite{WHF1989,Aljefri2018}, generalized to
several levels of misperception \cite{Aljefri2020}, through perceptual graph
models \cite{Obeidi2009}, and through interactive unawareness, including
higher-order awareness \cite{RegoVieira2022}. These frameworks take a
structure of misperception or unawareness as given and ask what stability
and equilibria follow from it; in the model of \cite{RegoVieira2022}, awareness
may also increase along moves. None of them treats actions whose purpose and
effect is to change the others' perceptions while the state stays the same.

\subsection{Manipulation and deception}

Within the graph model, strategic manipulation has been studied as the
costly adjustment of other parties' preferences through a social trust
network, formulated as an optimization problem \cite{Zhang2025}, and
Inohara et al.\ showed, in a framework of interdependent preferences,
conditions under which deceit about preferences cannot succeed
\cite{Inohara1997}. Outside the graph model, deception in $2\times2$ games
\cite{Brams1977}, deception in hypergames, including stealthy deception in
which a player manipulates another's perception without being detected
\cite{Gharesifard2014,Kovach2015}, and strategic information transmission
\cite{CrawfordSobel1982} concern related phenomena, typically with
cardinal payoffs or probabilistic beliefs.

\subsection{Representations of evidence}

Belnap's four-valued logic was designed for reasoning under possibly
inconsistent incoming information \cite{Belnap1977,Dunn1976}, and its
bilattice structure separates the order of truth from the order of
information \cite{Ginsberg1988}. The four-valued extension of the graph
model proposed in \cite{Kato2022,KatoQCW} uses it to represent the status of
options, with reduction operators that turn four-valued assessments into
decisions. It has so far been used for given assessments, not for their
change through actions.

\subsection{Positioning}

Against this background, the contributions of this paper are the
following. First, it makes explicit the treatment of inaction that the
basic definitions leave implicit, and distinguishes it from
state-preserving actions within the graph model itself. Second, where the
literature on misperception analyses given structures of perception, this
paper analyses how state-preserving actions transform them, through
observer-specific interpretation maps that cover both the common signal read
differently and selective disclosure. Third, where manipulation in the
graph model has concerned preferences, this paper keeps preferences fixed
and acts on assessments of capability and intention, and it shows that
these two kinds of assessment act on different stability concepts. Fourth,
the direction of each effect is determined ordinally, by the kind of
evidence and the observer's reduction operator, without cardinal payoffs or
probabilities; the results depend on the four-valued extension only through
the evidence-pair structure of \cref{sec:operators}.

% ===============================================================
\section{Preliminaries}
% ===============================================================

\subsection{Graph models, sanctions and stability}

We follow the conventions of \cite{KatoExclusion}. A graph model
$G=(N,S,(A_i)_{i\in N},(\succsim_i)_{i\in N})$ has a finite set $N$ of DMs,
a finite set $S$ of states, arc sets $A_i\subseteq S\times S$ and complete,
not necessarily transitive, preference relations $\succsim_i$ on $S$, with
strict part $\succ_i$. Remaining at a state is always possible and is not a
move. For $s\in S$,
\[
  R_i(s)=\{t : (s,t)\in A_i,\ t\neq s\},\qquad
  R_i^{+}(s)=\{t\in R_i(s): t\succ_i s\},\qquad
  \Phiw_i(s)=\{t: s\succsim_i t\}.
\]
For $i\in N$, $R_{N-i}(s)$ and $R_{N-i}^{+}(s)$ are the states reachable
from $s$ by legal sequences of moves, respectively improvements, of the DMs
other than $i$, in which no DM moves twice in succession \cite{FHK1993}.

We use the four basic stability concepts of the graph model
\cite{KHF1987,FHK1993} and state them through an explicit sanction
predicate. Let $i\in N$, $s\in S$ and $t\in R_i^{+}(s)$. The improvement
from $s$ to $t$ has
\begin{itemize}[leftmargin=*]
  \item a \emph{GMR-sanction} if some $s''\in R_{N-i}(t)$ satisfies
        $s\succsim_i s''$;
  \item an \emph{SMR-sanction} if some such $s''$ also satisfies
        $R_i(s'')\subseteq\Phiw_i(s)$;
  \item an \emph{SEQ-sanction} if some $s''\in R^{+}_{N-i}(t)$ satisfies
        $s\succsim_i s''$.
\end{itemize}
We write $\Sanc^\sigma_i(s,t)$ when the improvement has a $\sigma$-sanction.
State $s$ is \emph{Nash stable} for $i$ \cite{Nash1950,Nash1951} if
$R_i^{+}(s)=\emptyset$, and \emph{general metarational} (GMR)
\cite{Howard1971}, \emph{symmetric metarational} (SMR) \cite{Howard1971} or
\emph{sequentially stable} (SEQ) \cite{FraserHipel1979,FraserHipel1984} for
$i$ if every $t\in R_i^{+}(s)$ satisfies $\Sanc^\sigma_i(s,t)$ for
$\sigma=\mathrm{GMR}$, $\mathrm{SMR}$ or $\mathrm{SEQ}$ respectively. Let
$\cX=\{\mathrm{Nash},\mathrm{GMR},\mathrm{SMR},\mathrm{SEQ}\}$ and write
$\Stab^\sigma_i(G)$ for the set of states $\sigma$-stable for $i$ in $G$.
Coalitional versions of these concepts \cite{Kilgour2001,Inohara2008} are not
treated in this paper.

For $s\in S$ let $\hat D_G(s)$ be the forward-reachable set of $s$, the
smallest set containing $s$ and closed under arc successors. By the
localization lemma \cite[Lemma~3.4, Cor.~3.5]{KatoExclusion}, the
$\sigma$-stability of $s$ for any DM is determined by $\hat D_G(s)$, the arcs
with tails in $\hat D_G(s)$ and the preferences restricted to
$\hat D_G(s)$.

\subsection{Four values as evidence pairs}

We write each element of Belnap's lattice $\FOUR$ as a pair $(f,a)\in\{0,1\}^2$
recording whether there is evidence for ($f$) and against ($a$):
\[
  \Yv=(1,0),\qquad \Nv=(0,1),\qquad \Uv=(0,0),\qquad \Bv=(1,1).
\]
The \emph{knowledge order} is componentwise,
$(f,a)\lek(f',a')$ iff $f\le f'$ and $a\le a'$, with bottom $\Uv$ and top
$\Bv$. The \emph{truth order} is $(f,a)\leqt(f',a')$ iff $f\le f'$ and
$a\ge a'$, with bottom $\Nv$ and top $\Yv$. Conjunction is the truth-order
meet $(f,a)\wedge(f',a')=(\min(f,f'),\max(a,a'))$, and the empty conjunction
is $\Yv$ \cite{Belnap1977,Ginsberg1988}.

\begin{lemma}\label{lem:conj}
Conjunction is monotone in both the knowledge order and the truth order.
The $f$-component of a conjunction depends only on the $f$-components of its
arguments, and the $a$-component only on their $a$-components.
\end{lemma}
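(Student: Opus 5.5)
The plan is to work entirely in the evidence-pair coordinates, where conjunction is given componentwise by
\[
(f,a)\wedge(f',a')=(\min(f,f'),\max(a,a')).
\]
I will prove the second sentence of the statement first, because the first sentence follows from it directly. Once it is known that the $f$-component of the result is a function of the $f$-components alone, and the $a$-component a function of the $a$-components alone, monotonicity reduces to monotonicity of $\min$ and $\max$ on $\{0,1\}$.

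\emph{Componentwise dependence.} This can be read off the defining formula. The $f$-component $\min(f,f')$ mentions only $f,f'$, and the $a$-component $\max(a,a')$ mentions only $a,a'$. For a finite conjunction $\bigwedge_k (f_k,a_k)$ I will argue by induction on the number of conjuncts. The base case is the empty conjunction $\Yv=(1,0)$. This matches the empty $\min$ being $1$ and the empty $\max$ being $0$, so the closed form is $(\min_k f_k,\max_k a_k)$ in every case. Associativity and commutativity of $\min$ and $\max$ make this well defined.

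\emph{Monotonicity.} Suppose $(f_k,a_k)\lek(f'_k,a'_k)$ for every $k$, that is, $f_k\le f'_k$ and $a_k\le a'_k$. Since $\min$ and $\max$ are monotone in each argument, $\min_k f_k\le\min_k f'_k$ and $\max_k a_k\le\max_k a'_k$. This is exactly the knowledge-order inequality between the two conjunctions.

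For the truth order the hypothesis is $f_k\le f'_k$ and $a_k\ge a'_k$. The same monotonicity of $\min$ and $\max$ gives $\min_k f_k\le\min_k f'_k$ and $\max_k a_k\ge\max_k a'_k$, which is the truth-order inequality. Monotonicity "in both arguments" follows because the argument allows any subset of the conjuncts to vary while the others are held fixed.

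I do not expect a real obstacle. The only care needed is to fix the reading of the statement. I take it to mean monotone in each argument separately, and equivalently jointly, for finite conjunctions including the empty one. I also need to check that the empty-conjunction convention $\Yv$ is consistent with the $\min$/$\max$ closed form, which the base case above handles.
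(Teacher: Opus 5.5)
Your proposal is correct and is the paper's argument: the paper's proof just says the lemma is immediate from the componentwise formula $(\min(f,f'),\max(a,a'))$. You spell out the details, including the extension to finite conjunctions with the empty conjunction $(1,0)$ matching the empty $\min$ and $\max$, which is the form in which the lemma is used for the warrants $w_\psi(a)$.
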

\begin{proof}
Immediate from the componentwise formula.
\end{proof}

\subsection{Reduction operators}

A \emph{reduction operator} is a map $\delta\colon\FOUR\to\{\Yv,\Nv\}$ with
$\delta(\Yv)=\Yv$ and $\delta(\Nv)=\Nv$. There are exactly four, determined by
$(\delta(\Uv),\delta(\Bv))$; we write $\delta_{\Nv\Yv}$ for the one with
$\delta(\Uv)=\Nv$ and $\delta(\Bv)=\Yv$, and similarly $\delta_{\Yv\Nv}$,
$\delta_{\Nv\Nv}$ and $\delta_{\Yv\Yv}$. In terms of evidence,
\begin{align*}
  \delta_{\Nv\Yv}(f,a)=\Yv &\iff f=1
    &&\text{(any evidence for suffices)},\\
  \delta_{\Yv\Nv}(f,a)=\Yv &\iff a=0
    &&\text{(any evidence against defeats)},\\
  \delta_{\Nv\Nv}(f,a)=\Yv &\iff f=1\ \text{and}\ a=0
    &&\text{(only uncontested evidence for)},\\
  \delta_{\Yv\Yv}(f,a)=\Yv &\iff f=1\ \text{or}\ a=0
    &&\text{(anything but uncontested evidence against)}.
\end{align*}
The operator $\delta_{\Nv\Yv}$ is the operator $\delta_{\mathrm{prec}}$ of
\cite{KatoQCW}, the only one that is monotone in the knowledge order.
All four are monotone in the truth order \cite{KatoQCW}. The notation
differs from that of \cite{KatoQCW}: here each operator is indexed directly
by its values on $\Uv$ and $\Bv$.

% ===============================================================
\section{State-preserving actions}\label{sec:actions}
% ===============================================================

\subsection{Epistemic states and perceived models}

Let $O=\bigsqcup_{i\in N}O_i$ be a finite set of options, $O_i$ being those
controlled by DM $i$. Each arc $a\in A_i$ carries a \emph{requirement}
$\req(a)\subseteq O_i$, the options of $i$ whose availability the move
depends on. For $i\in N$ let $A_{-i}=\bigcup_{j\neq i}A_j$.

\begin{definition}[Epistemic state]\label{def:epistemic}
The \emph{epistemic state} of DM $i$ is a pair $\kappa^i=(\psi^i,\chi^i)$ of
\begin{itemize}[leftmargin=*]
  \item a \emph{capability assessment} $\psi^i\colon O\to\FOUR$, recording
        the evidence $i$ holds on the availability of each option; and
  \item an \emph{intention assessment} $\chi^i\colon A_{-i}\to\FOUR$,
        recording, for each move of another DM, the evidence $i$ holds that
        the move would be an improvement for the DM who controls it.
\end{itemize}
The set of epistemic states of DM $i$ is
$K_i=\FOUR^{O}\times\FOUR^{A_{-i}}$, and an \emph{epistemic profile} is an
element $\kappa=(\kappa^i)_{i\in N}$ of the finite set
$K=\prod_{i\in N}K_i$. We write $\kappa^i\lek\kappa'^i$ if the knowledge order
holds in every component of both assessments.
\end{definition}

Each DM $i$ uses a reduction operator $\delta_i$. The \emph{warrant} of an arc
$a$ under $\psi$ is $w_\psi(a)=\bigwedge_{o\in\req(a)}\psi(o)$.

\begin{definition}[Perceived model and judgements]\label{def:perceived}
The \emph{perceived arcs} of DM $i$ are
$A_j[\kappa^i]=\{a\in A_j:\delta_i(w_{\psi^i}(a))=\Yv\}$ for $j\in N$, and its
\emph{perceived credible arcs} are
$B_j[\kappa^i]=\{a\in A_j[\kappa^i]:\delta_i(\chi^i(a))=\Yv\}$ for $j\neq i$,
while $B_i[\kappa^i]$ consists of the arcs of $A_i[\kappa^i]$ that are
improvements for $i$. The \emph{perceived model} $G[\kappa^i]$ is the graph
model with arcs $A_j[\kappa^i]$. DM $i$ \emph{judges} $s$ to be
$\sigma$-stable for itself if $s$ is $\sigma$-stable for $i$ in $G[\kappa^i]$,
where for $\sigma=\mathrm{SEQ}$ the sets $R^{+}_{N-i}$ are replaced by the
sets $R^{B}_{N-i}$ of states reachable by legal sequences along perceived
credible arcs. We write $\Stab^\sigma_i[\kappa^i]$ for the set of states $i$
judges $\sigma$-stable.
\end{definition}

Preferences are taken to be those of the model; what a DM may misjudge is
which moves are available and which moves the others would want to make.
When $\chi^i(a)=\Yv$ exactly for the arcs that are improvements for their
controllers and $\chi^i(a)=\Nv$ otherwise, $B_j[\kappa^i]$ is the set of
perceived improving arcs, and the judgements are standard stability in the
perceived model. When, moreover, every capability assessment is $\Yv$ or
$\Nv$ and agrees with the model, the perceived model is the model itself.

\subsection{Three kinds of transition}

\begin{definition}[Augmented state space and transitions]\label{def:augmented}
The \emph{augmented state space} is $\widetilde S=S\times K$, with projection
$\pi_S(s,\kappa)=s$. A transition of $\widetilde S$ is one of the following.
\begin{enumerate}[label=(\alph*)]
  \item A \emph{physical move} of DM $i$: $(s,\kappa)\to(t,\kappa)$ with
        $(s,t)\in A_i$ and $t\neq s$.
  \item A \emph{state-preserving action} of DM $k$:
        $(s,\kappa)\to(s,\kappa')$ with $\kappa'\neq\kappa$, as specified in
        \cref{def:action}.
\end{enumerate}
\emph{Inaction} at $(s,\kappa)$ is the absence of a transition.
\end{definition}

In the present model physical moves leave epistemic states unchanged: we
abstract from the epistemic updating that observed physical moves cause in
practice, which would require an additional observation rule. Under
$\pi_S$, a state-preserving action is sent to a self-loop at $s$, while
inaction involves no transition at all. The convention that remaining at a
state is not a move, and hence that $R_i(s)$ excludes $s$, concerns
inaction; it does not exclude state-preserving actions, which are
transitions of the augmented space and do not enter the legal sequences of
the physical model.

\begin{definition}[Actions and interpretation]\label{def:action}
For each DM $k$ and state $s$, let $\Act_k(s)$ be a finite set of
\emph{state-preserving actions} available to $k$ at $s$. For each observer
$i\neq k$, an \emph{interpretation map}
$\Gamma_i\colon\Act_k(s)\times K_i\to K_i$ assigns to each $d\in\Act_k(s)$ and
epistemic state $\kappa^i$ an epistemic state $\Gamma_i(d,\kappa^i)$ with
\[
  \kappa^i\lek\Gamma_i(d,\kappa^i).
\]
Performing $d$ at $(s,\kappa)$ leads to $(s,\kappa')$, where
$\kappa'^i=\Gamma_i(d,\kappa^i)$ for $i\neq k$ and $\kappa'^k=\kappa^k$. The
change $\kappa^i\mapsto\Gamma_i(d,\kappa^i)$ is the \emph{disturbance} of
$i$'s epistemic state \emph{induced} by $d$, and
$\mathcal D_{k\to i}(s,\kappa^i)=\{\Gamma_i(d,\kappa^i):d\in\Act_k(s)\}$ is
the set of disturbances of $i$ that $k$ can induce at $s$. The dependence of
$\Gamma_i$ on the actor $k$ and the physical state $s$ is suppressed in the
notation.
\end{definition}

We assume that a state-preserving action does not directly alter the
actor's own epistemic state. In practice an actor may learn from the
reactions to its own action, as with a probe or a trial announcement;
allowing such self-updating is a straightforward extension that we do not
pursue here.

An action is not itself evidence; it generates evidence in each observer,
and different observers may read the same action differently. If $i$ does
not observe $d$, then $\Gamma_i(d,\kappa^i)=\kappa^i$; selective disclosure
is an action observed by some DMs and not others. Announcements, exercises,
leaks and bluffs are distinguished by their sets $\Act_k(s)$ and their
interpretation maps, not by their effect on the physical state, which is
none. The requirement $\kappa^i\lek\Gamma_i(d,\kappa^i)$ is the following
assumption.

\begin{assumption}[Monotone accumulation]\label{ass:accumulation}
Evidence can be added but not removed: interpretation never lowers an
assessment in the knowledge order.
\end{assumption}

The assumption excludes retraction, correction and forgetting of evidence,
and the discrediting of sources; its consequences below, in particular the
absorbing character of contradiction, depend on it.

\subsection{Kinds of disturbance}

A disturbance $\kappa^i\mapsto\kappa'^i$ changes the capability assessment
$\psi^i$, the intention assessment $\chi^i$, or both. A change of $\psi^i$
is \emph{opponent-side} if it concerns only options outside $O_i$ and
\emph{own-side} if it concerns only options in $O_i$. A change of either
assessment is a \emph{for-disturbance} if it raises only $f$-components and
an \emph{against-disturbance} if it raises only $a$-components. Since the
components are raised independently, every disturbance is a composite, in
any order, of a for-part and an against-part, each of which splits further
into opponent-side and own-side parts and into capability and intention
parts. These componentwise updates commute at the level of epistemic states
and lead to the same final assessment; the intermediate perceived models
may differ with the order (for instance, $\Uv\to\Yv\to\Bv$ and
$\Uv\to\Nv\to\Bv$).

% ===============================================================
\section{How reduction operators respond to evidence}\label{sec:operators}
% ===============================================================

\begin{proposition}[Direction of evidence]\label{prop:truth}
For every reduction operator $\delta$ and every $x\in\FOUR$: raising the
$f$-component of $x$ can change $\delta(x)$ only from $\Nv$ to $\Yv$, and
raising the $a$-component can change it only from $\Yv$ to $\Nv$.
\end{proposition}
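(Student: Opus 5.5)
The plan is to reduce the claim to monotonicity of $\delta$ in the truth order, which the paper already records for all four operators (citing \cite{KatoQCW}), and to back this up with a direct check from the evidence characterizations of the four operators, so that the argument does not rest on the citation alone.

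\textbf{Step 1: relate raising a component to the truth order.} Raising the $f$-component of $x=(f,a)$ means passing from $(0,a)$ to $(1,a)$. Since $0\le 1$ and $a\ge a$, this gives $x\leqt x'$. Raising the $a$-component means passing from $(f,0)$ to $(f,1)$. Since $f\le f$ and $0\le 1$, this gives $x'\leqt x$. In words, evidence for moves upward in the truth order and evidence against moves downward.

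\textbf{Step 2: apply monotonicity of $\delta$.} Each $\delta$ is monotone in the truth order, and on $\{\Yv,\Nv\}$ that order is $\Nv\leqt\Yv$. For a for-raise we therefore get $\delta(x)\leqt\delta(x')$. The only change compatible with this is $\Nv\to\Yv$. Symmetrically, for an against-raise we get $\delta(x')\leqt\delta(x)$, which permits only the change $\Yv\to\Nv$.

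\textbf{Step 3: direct verification.} For self-containedness I would also read the claim off the four displayed characterizations. In each of them, the condition ``$\delta(f,a)=\Yv$'' is a monotone Boolean condition that is increasing in $f$ and decreasing in $a$; the four conditions are $f=1$, $a=0$, $f=1\wedge a=0$, and $f=1\vee a=0$. Raising $f$ can therefore only make the condition true, never false, and raising $a$ can only make it false.

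I do not expect a real obstacle. The only point that needs care is that ``raising'' a component means changing it from $0$ to $1$ while the other component is held fixed, so that the truth-order comparison in Step 1 is exactly as stated.
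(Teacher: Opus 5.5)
Your proposal is correct and follows the paper's proof: raising $f$ is an increase and raising $a$ a decrease in the truth order, all four operators are truth-monotone, and the conclusion is confirmed directly from the four evidence formulas. Your Step~1 spells out the truth-order comparison in slightly more detail, but the route is the same.
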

\begin{proof}
Raising $f$ is an increase in the truth order and raising $a$ a decrease,
and every reduction operator is monotone in the truth order. Directly: in
each of the four evidence formulas, the condition for $\Yv$ is preserved by
raising $f$ and by lowering $a$.
\end{proof}

\begin{proposition}[Sensitivity]\label{prop:sensitivity}
$\delta_{\Nv\Yv}(x)$ depends only on the $f$-component of $x$, and
$\delta_{\Yv\Nv}(x)$ only on the $a$-component. Each of $\delta_{\Nv\Nv}$ and
$\delta_{\Yv\Yv}$ depends on both components.
\end{proposition}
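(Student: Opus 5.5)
The proof reads everything off the evidence formulas for the four operators in the subsection on reduction operators, identifying each $x\in\FOUR$ with its pair $(f,a)\in\{0,1\}^2$. ``Depends only on the $f$-component'' means that $\delta(f,a)=\delta(f,a')$ for all $f,a,a'$; symmetrically for the $a$-component. ``Depends on both components'' means that neither of these holds: some change of $f$ alone, and some change of $a$ alone, changes the value.

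For the first claim, the formula $\delta_{\Nv\Yv}(f,a)=\Yv\iff f=1$ has a condition that does not mention $a$. Hence $\delta_{\Nv\Yv}(f,0)=\delta_{\Nv\Yv}(f,1)$ for each $f$. Concretely, $\delta_{\Nv\Yv}(\Nv)=\delta_{\Nv\Yv}(\Uv)=\Nv$ and $\delta_{\Nv\Yv}(\Yv)=\delta_{\Nv\Yv}(\Bv)=\Yv$, using $\delta_{\Nv\Yv}(\Uv)=\Nv$ and $\delta_{\Nv\Yv}(\Bv)=\Yv$ from the definition. Dually, $\delta_{\Yv\Nv}(f,a)=\Yv\iff a=0$ does not mention $f$, so $\delta_{\Yv\Nv}(\Uv)=\delta_{\Yv\Nv}(\Yv)=\Yv$ and $\delta_{\Yv\Nv}(\Nv)=\delta_{\Yv\Nv}(\Bv)=\Nv$.

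For the second claim I exhibit, for each operator, one pair of inputs differing only in $f$ and one differing only in $a$ on which the output changes.
\begin{itemize}
\item For $\delta_{\Nv\Nv}$, with condition $f=1\wedge a=0$: going from $\Uv=(0,0)$ to $\Yv=(1,0)$ changes the value from $\Nv$ to $\Yv$, so it depends on $f$. Going from $\Yv=(1,0)$ to $\Bv=(1,1)$ changes it from $\Yv$ to $\Nv$, so it depends on $a$.
\item For $\delta_{\Yv\Yv}$, with condition $f=1\vee a=0$: going from $\Nv=(0,1)$ to $\Bv=(1,1)$ changes the value from $\Nv$ to $\Yv$. Going from $\Uv=(0,0)$ to $\Nv=(0,1)$ changes it from $\Yv$ to $\Nv$.
\end{itemize}
Each witness uses only the defining values $\delta(\Yv)=\Yv$, $\delta(\Nv)=\Nv$ and the indexed values on $\Uv,\Bv$.

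There is no real obstacle; the argument is a finite table check. The only point needing care is to state explicitly what ``depends on both components'' means, namely the existence of a witness for each coordinate rather than a joint condition.
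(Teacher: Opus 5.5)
Your proof is correct and matches the paper's own argument: the first two claims are read off the evidence formulas, and for $\delta_{\Nv\Nv}$ and $\delta_{\Yv\Yv}$ you use exactly the paper's four witnesses (raising $f$ at $\Uv$ and $a$ at $\Yv$ for the first; raising $f$ at $\Nv$ and $a$ at $\Uv$ for the second). Your explicit reading of ``depends on both components'' as having a witness for each coordinate is a clarification that the paper leaves implicit, and it does not change the argument.
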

\begin{proof}
The first two claims are read off the evidence formulas. For
$\delta_{\Nv\Nv}$, raising $f$ at $\Uv$ changes the value from $\Nv$ to $\Yv$,
and raising $a$ at $\Yv$ changes it from $\Yv$ to $\Nv$. For $\delta_{\Yv\Yv}$,
raising $f$ at $\Nv$ changes it from $\Nv$ to $\Yv$, and raising $a$ at $\Uv$
from $\Yv$ to $\Nv$.
\end{proof}

Since a disturbance raises both components in general, these facts recover
the behaviour of the operators in the knowledge order: $\delta_{\Nv\Yv}$ is
monotone, $\delta_{\Yv\Nv}$ is antitone, and $\delta_{\Nv\Nv}$ and
$\delta_{\Yv\Yv}$ are neither. This is the pattern behind the failure of
knowledge monotonicity for three of the four operators in \cite{KatoQCW}.

\begin{proposition}[Immune assessments]\label{prop:invariant}
Call $x\in\FOUR$ \emph{immune} for $\delta$ if $\delta(y)=\delta(x)$ for every
$y$ with $x\lek y$. Then $\Uv$ is immune for no operator, $\Bv$ is immune for
every operator, $\Yv$ is immune iff $\delta(\Bv)=\Yv$, and $\Nv$ is immune iff
$\delta(\Bv)=\Nv$. Hence the immune values are $\{x:f=1\}$ for $\delta_{\Nv\Yv}$
and $\delta_{\Yv\Yv}$, and $\{x:a=1\}$ for $\delta_{\Yv\Nv}$ and
$\delta_{\Nv\Nv}$.
\end{proposition}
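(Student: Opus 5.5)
The plan is a direct case analysis over the four values, organised by the upsets of the knowledge order. I will first record these upsets explicitly: $\{y: \Uv\lek y\}=\FOUR$, $\{y:\Yv\lek y\}=\{\Yv,\Bv\}$, $\{y:\Nv\lek y\}=\{\Nv,\Bv\}$ and $\{y:\Bv\lek y\}=\{\Bv\}$. These follow at once from the componentwise definition of $\lek$. Immunity of $x$ then means that $\delta$ is constant on the upset of $x$, and each of the four claims reduces to comparing a few values of $\delta$.

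\textbf{Step 1: the four individual claims.} For $\Bv$, the upset is a singleton, so $\Bv$ is trivially immune for every $\delta$. For $\Yv$, immunity means $\delta(\Bv)=\delta(\Yv)$. Since every reduction operator satisfies $\delta(\Yv)=\Yv$, this holds iff $\delta(\Bv)=\Yv$. The case of $\Nv$ is symmetric: immunity means $\delta(\Bv)=\delta(\Nv)=\Nv$. For $\Uv$, the upset contains both $\Yv$ and $\Nv$, on which every reduction operator takes the distinct values $\Yv$ and $\Nv$. So $\delta$ is never constant there, and $\Uv$ is immune for no operator.

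\textbf{Step 2: the ``hence'' clause.} For the operators with $\delta(\Bv)=\Yv$, namely $\delta_{\Nv\Yv}$ and $\delta_{\Yv\Yv}$, Step 1 shows the immune values are exactly $\Yv$ and $\Bv$, i.e.\ $\{x:f=1\}$. For those with $\delta(\Bv)=\Nv$, namely $\delta_{\Yv\Nv}$ and $\delta_{\Nv\Nv}$, the immune values are $\Nv$ and $\Bv$, i.e.\ $\{x:a=1\}$. This uses the indexing convention that the second subscript is $\delta(\Bv)$.

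As a sanity check, I would cross-reference \cref{prop:truth}. Raising components from $\Yv$ can only add $a$, which can only switch $\Yv$ to $\Nv$, and this happens exactly when $\delta(\Bv)=\Nv$.

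There is no real obstacle here. The only point needing care is that the immunity definition includes $y=x$ itself, and that the knowledge-order upsets are computed correctly, in particular that $\Yv$ and $\Nv$ are incomparable in $\lek$. Both are immediate from the pair encoding.
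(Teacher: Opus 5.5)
Your proof is correct and follows the paper's own argument. The paper likewise lists the knowledge-order upsets $\FOUR$, $\{\Yv,\Bv\}$, $\{\Nv,\Bv\}$ and $\{\Bv\}$, notes that $\delta$ is not constant on $\FOUR$, and reads off each claim and the ``hence'' clause exactly as you do.
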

\begin{proof}
The upper sets are $\FOUR$ for $\Uv$, $\{\Bv\}$ for $\Bv$, $\{\Yv,\Bv\}$ for
$\Yv$ and $\{\Nv,\Bv\}$ for $\Nv$, and $\delta$ is not constant on $\FOUR$.
\end{proof}

Under \cref{ass:accumulation}, ignorance is always open to further
evidence, and a contradictory assessment is \emph{absorbing}: it cannot be
changed by any further evidence. This is a consequence of monotone
accumulation, not of four-valued logic alone. A definite assessment resists
further evidence exactly when the DM resolves contradiction in the same
direction: a DM that treats contested evidence as availability cannot be
talked out of believing that a move is available, but can be talked into
believing it; the dual DM the other way round.

\begin{lemma}[Immune warrants]\label{lem:immunewarrant}
Let $a$ be an arc and $\psi\lek\psi'$. If $\psi(o)$ is immune for $\delta$ for
every $o\in\req(a)$, then $\delta(w_{\psi'}(a))=\delta(w_\psi(a))$.
\end{lemma}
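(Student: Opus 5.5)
By \cref{prop:invariant}, the immune values for any operator $\delta$ form a single class that depends only on $\delta(\Bv)$. If $\delta(\Bv)=\Yv$ the immune values are $\{x:f=1\}$, and if $\delta(\Bv)=\Nv$ they are $\{x:a=1\}$. I would split the proof into these two cases and track one evidence component of the warrant through the conjunction, using \cref{lem:conj}. Throughout, write $x=w_\psi(a)$ and $x'=w_{\psi'}(a)$.

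First I would show $x\lek x'$. Since $\psi\lek\psi'$ componentwise and conjunction is monotone in the knowledge order (\cref{lem:conj}), this holds. So once I know that $x$ itself is immune, the definition of immunity gives $\delta(x')=\delta(x)$ directly. The task therefore reduces to showing that the warrant $x$ is immune.

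\emph{Case $\delta(\Bv)=\Nv$.} Immune means $a=1$. The $a$-component of a conjunction is the maximum of the $a$-components of its arguments (\cref{lem:conj}). If $\req(a)\neq\emptyset$, every conjunct has $a=1$, so $x$ has $a=1$ and is immune.

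\emph{Case $\delta(\Bv)=\Yv$.} Immune means $f=1$. The $f$-component of the conjunction is the minimum of the $f$-components of its arguments. Every conjunct has $f=1$, so the minimum is $1$ and $x$ is immune. This also holds vacuously when $\req(a)=\emptyset$, since then $x=\Yv$.

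The one step needing care is the empty requirement in the case $\delta(\Bv)=\Nv$. There the warrant is the empty conjunction $\Yv$, which is not immune for that operator. However, the warrant does not depend on $\psi$ at all: $x=x'=\Yv$ for every $\psi$. So the conclusion holds trivially, and I would handle it as a separate preliminary case, $\req(a)=\emptyset$, before the main argument. I expect no other obstacle.
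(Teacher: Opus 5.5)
Your proof is correct and follows essentially the paper's argument: dispose of $\req(a)=\emptyset$ (both warrants are $\Yv$), then use \cref{prop:invariant} and \cref{lem:conj} to track the $f$-component (if $\delta(\Bv)=\Yv$) or the $a$-component (if $\delta(\Bv)=\Nv$) through the conjunction. The only cosmetic difference is that you establish $w_\psi(a)\lek w_{\psi'}(a)$ and apply the definition of immunity to the warrant itself, whereas the paper checks that both warrants lie in the immune class, on which $\delta$ is constant.
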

\begin{proof}
If $\req(a)=\emptyset$ both warrants are $\Yv$. Otherwise, if $\delta(\Bv)=\Yv$,
every $\psi(o)$ and every $\psi'(o)$, $o\in\req(a)$, has $f=1$ by
\cref{prop:invariant}; by \cref{lem:conj} both warrants have $f=1$, and
$\delta_{\Nv\Yv}$ and $\delta_{\Yv\Yv}$ send every such value to $\Yv$. If
$\delta(\Bv)=\Nv$, the same argument applies with $a=1$.
\end{proof}

% ===============================================================
\section{Stability and move sets}
% ===============================================================

\begin{lemma}[Monotonicity of reachability]\label{lem:reachmono}
Let $G$ and $G'$ have the same states and preferences, let $i\in N$, and
suppose that $A_j\subseteq A'_j$ for every $j\neq i$. Then
$R_{N-i}(s)\subseteq R'_{N-i}(s)$ and $R^{+}_{N-i}(s)\subseteq R'^{+}_{N-i}(s)$
for every $s$. The same holds for reachability along any family of
designated arc sets with $B_j\subseteq B'_j$ for every $j\neq i$.
\end{lemma}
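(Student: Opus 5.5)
The plan is to argue directly from the definition of $R_{N-i}(s)$ as the set of endpoints of legal sequences, by showing that every legal sequence in $G$ is a legal sequence in $G'$. A legal sequence of moves by the DMs other than $i$ starting at $s$ is a finite chain $s=s_0\to s_1\to\cdots\to s_m$ together with a labelling of each step by a DM $j_\ell\neq i$. Each step must satisfy $(s_{\ell-1},s_\ell)\in A_{j_\ell}$ and $s_\ell\neq s_{\ell-1}$, and consecutive labels must satisfy $j_\ell\neq j_{\ell+1}$. Given such a sequence in $G$, I will keep the same chain and the same labels. Each arc condition transfers because $A_{j_\ell}\subseteq A'_{j_\ell}$ and $j_\ell\neq i$. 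The distinctness condition $s_\ell\neq s_{\ell-1}$ and the alternation condition on labels do not refer to the arc sets at all, so they carry over unchanged. Hence the sequence is legal in $G'$, and its endpoint lies in $R'_{N-i}(s)$. This proves the first inclusion.

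For the improvement version $R^{+}_{N-i}$, the argument is the same, with one additional condition per step: $s_\ell\succ_{j_\ell}s_{\ell-1}$. This condition depends only on the preferences, which are the same in $G$ and $G'$. So it transfers verbatim, and the same labelled chain witnesses membership in $R'^{+}_{N-i}(s)$.

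The final sentence of the statement, about designated arc sets $B_j\subseteq B'_j$ for $j\neq i$, is handled by the same argument with $A_j$ replaced by $B_j$. Reachability along designated arcs, such as $R^{B}_{N-i}$ in \cref{def:perceived}, is defined by legal sequences whose steps lie in the designated sets, with the same distinctness and alternation rules. The only hypothesis used is the step-by-step inclusion of the arc sets.

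If $R_{N-i}$ is instead presented recursively, as in \cite{FHK1993}, via the auxiliary sets $\Omega_{N-i}(s,t)$ recording the last movers, I would prove the same thing by induction on the construction. The claim is $R_{N-i}(s)\subseteq R'_{N-i}(s)$ and $\Omega_{N-i}(s,t)\subseteq\Omega'_{N-i}(s,t)$ for every $t\in R_{N-i}(s)$. Both the base case and the inductive step follow from $A_j\subseteq A'_j$ for $j\neq i$.

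There is no substantive obstacle. The one point that needs care is the alternation constraint, that no DM moves twice in succession. Enlarging the arc sets could in principle change which DM is recorded as the last mover. This is why I would carry the labels along explicitly, or equivalently include the last-mover sets in the inductive claim. With the labels kept, the inclusion of last-mover sets is immediate, and the alternation constraint cannot fail in $G'$ for a sequence that satisfied it in $G$.
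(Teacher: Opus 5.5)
Your proposal is correct and follows the paper's proof: the paper likewise observes that a legal sequence in $G$ is a legal sequence in $G'$, because its steps are arcs of $G'$ and its labels and preference comparisons are unchanged. Your remarks on carrying the labels, or the last-mover sets, along explicitly spell out what the paper's one-line argument leaves implicit.
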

\begin{proof}
A legal sequence in $G$ is a legal sequence in $G'$: its steps are arcs of
$G'$, and its labels and preference comparisons are unchanged.
\end{proof}

\begin{lemma}[Stability in move sets]\label{lem:stabmono}
Let $G$ and $G'$ have the same states and preferences, and let $i\in N$ and
$\sigma\in\cX$.
\begin{enumerate}[label=(\alph*)]
  \item If $A_i=A'_i$ and $A_j\subseteq A'_j$ for all $j\neq i$, then
        $\Stab^\sigma_i(G)\subseteq\Stab^\sigma_i(G')$, with equality for
        Nash.
  \item If $A_j=A'_j$ for all $j\neq i$ and $A_i\subseteq A'_i$, then
        $\Stab^\sigma_i(G')\subseteq\Stab^\sigma_i(G)$.
\end{enumerate}
The same holds for SEQ judgements with the credible arc sets $B_j$ in place
of the improving arcs of $j\neq i$.
\end{lemma}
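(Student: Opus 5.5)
The plan is to treat each stability concept separately, using the sanction formulation. The key observation is that every stability condition has the shape ``for every $t\in R_i^{+}(s)$, some sanction exists''. Here $R_i^{+}(s)$ depends only on $A_i$ and $\succsim_i$. The sanction predicate depends on $A_i$ only through $R_i(s'')$ in the SMR clause, and otherwise only on the opponents' reachability sets $R_{N-i}$, $R^{+}_{N-i}$ (or $R^{B}_{N-i}$).

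For (a), fix $s\in\Stab^\sigma_i(G)$. Since $A_i=A'_i$, we have $R_i^{+}(s)=R_i'^{+}(s)$ and $R_i(s'')=R'_i(s'')$ for every $s''$. For Nash this already gives equality of the stable sets, because Nash stability involves only $R_i^{+}(s)$. For GMR and SEQ, take any $t\in R_i'^{+}(s)=R_i^{+}(s)$ and its witness $s''$ in $G$. By \cref{lem:reachmono}, $s''\in R_{N-i}(t)\subseteq R'_{N-i}(t)$, respectively $s''\in R^{+}_{N-i}(t)\subseteq R'^{+}_{N-i}(t)$. The comparison $s\succsim_i s''$ is unchanged, so the same $s''$ is a sanction in $G'$. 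For SMR the extra clause $R_i(s'')\subseteq\Phiw_i(s)$ also transfers, because $R_i$ is unchanged. For the SEQ judgement with credible arcs, apply the last sentence of \cref{lem:reachmono} to $B_j\subseteq B'_j$ for $j\neq i$.

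For (b), fix $s\in\Stab^\sigma_i(G')$. Since $A_i\subseteq A'_i$, we get $R_i(s)\subseteq R'_i(s)$ and hence $R_i^{+}(s)\subseteq R_i'^{+}(s)$. So every improvement in $G$ is an improvement in $G'$, where it has a sanction; for Nash, emptiness of $R_i'^{+}(s)$ forces emptiness of $R_i^{+}(s)$. The opponents' arc sets are equal. The legal sequences of $N-i$ never use arcs of $i$, so $R_{N-i}$, $R^{+}_{N-i}$ and $R^{B}_{N-i}$ coincide in $G$ and $G'$. The GMR and SEQ witnesses therefore carry over unchanged. For SMR, the witness satisfies $R'_i(s'')\subseteq\Phiw_i(s)$, and since $R_i(s'')\subseteq R'_i(s'')$ the same inclusion holds in $G$. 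In the SEQ judgement version, $B_i$ is the set of improving arcs of $A_i$, which only grows, but the definition of SEQ does not use $B_i$ beyond $R_i^{+}$, so nothing changes.

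The main point needing care is the SMR clause. In (b) it is the only place where the focal DM's own arcs enter a sanction, and its direction must be checked: the clause is a universal condition on $R_i(s'')$, so it is antitone in $A_i$, which matches the required inclusion. A secondary point is to verify that the opponents' reachable sets do not depend on $A_i$. This holds because legal sequences for $N-i$ use only arcs of DMs other than $i$, and the ``no DM moves twice in succession'' restriction involves only those DMs' labels.
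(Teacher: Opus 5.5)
Your proposal is correct and follows the same route as the paper: improvements and escape sets $R_i(s'')$ depend only on $A_i$, the sets in which sanctions are sought depend only on the opponents' arcs and grow by \cref{lem:reachmono}, and the SMR clause is a universal condition on $R_i(s'')$ and so only gets harder to satisfy as $A_i$ grows. You spell out the witness-transfer argument explicitly, including the credible-arc variant, where the paper compresses it into a few lines.
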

\begin{proof}
(a) The improvements $R_i^{+}(s)$ and the escape sets $R_i(s'')$ depend only
on $A_i$; by \cref{lem:reachmono} the sets in which sanctions are sought can
only grow, so every sanction persists. Nash stability involves $A_i$ only.
(b) Now $R_i^{+}(s)$ and every $R_i(s'')$ can only grow, while the sets in
which sanctions are sought are unchanged; every condition requires a
sanction for every improvement, and SMR also requires
$R_i(s'')\subseteq\Phiw_i(s)$, so both become harder to satisfy.
\end{proof}

% ===============================================================
\section{The direction of disturbance}\label{sec:direction}
% ===============================================================

\begin{theorem}[Direction for capability]\label{thm:direction}
Let DM $i$ use $\delta_i$, let $\kappa^i\mapsto\kappa'^i$ be a disturbance that
changes only the capability assessment, let $\sigma\in\cX$, and write
$\Stab=\Stab^\sigma_i[\kappa^i]$ and $\Stab'=\Stab^\sigma_i[\kappa'^i]$.
\begin{enumerate}[label=(\alph*)]
  \item Opponent-side for-disturbance: $\Stab\subseteq\Stab'$.
  \item Opponent-side against-disturbance: $\Stab'\subseteq\Stab$.
  \item Own-side for-disturbance: $\Stab'\subseteq\Stab$.
  \item Own-side against-disturbance: $\Stab\subseteq\Stab'$.
\end{enumerate}
For Nash stability, (a) and (b) hold with equality.
\end{theorem}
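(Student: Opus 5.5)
The plan is to reduce the theorem to the monotonicity of stability in move sets, \cref{lem:stabmono}. The key intermediate step is to show that each kind of disturbance changes the perceived arc sets $A_j[\kappa^i]$ only monotonically, and only on the appropriate side.

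First, fix an arc $a\in A_j$ and compare the warrants $w_{\psi^i}(a)$ and $w_{\psi'^i}(a)$. A capability disturbance changes $\psi^i$ only on some options. If it is opponent-side, it touches only options outside $O_i$. Since $\req(a)\subseteq O_j$ and the $O_j$ are disjoint, the warrants of arcs of $i$ itself are unchanged, so $A_i[\kappa'^i]=A_i[\kappa^i]$. Symmetrically, an own-side disturbance leaves every $A_j[\kappa^i]$ with $j\neq i$ unchanged. For the arcs whose requirements are touched, I will use \cref{lem:conj}. A for-disturbance raises only $f$-components of the $\psi^i(o)$, so the conjunction $w(a)$ can change only by raising its $f$-component, with its $a$-component fixed. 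Dually, an against-disturbance changes $w(a)$ only by raising its $a$-component. Then \cref{prop:truth} gives the direction of change. Under a for-disturbance $\delta_i(w(a))$ can only pass from $\Nv$ to $\Yv$, so the touched arc sets grow. Under an against-disturbance it can only pass from $\Yv$ to $\Nv$, so they shrink.

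Second, I handle the credible arcs needed for SEQ. Since $\chi^i$ is unchanged, $B_j[\kappa^i]=\{a\in A_j[\kappa^i]:\delta_i(\chi^i(a))=\Yv\}$ is the intersection of $A_j[\kappa^i]$ with a fixed set. It is therefore monotone in $A_j[\kappa^i]$ in the same direction for $j\neq i$. The set $B_i$, the improving arcs of $i$ within $A_i[\kappa^i]$, is likewise monotone in $A_i[\kappa^i]$, because preferences are fixed. The four cases then follow from \cref{lem:stabmono}, applied with $G=G[\kappa^i]$ and $G'=G[\kappa'^i]$ (or the reverse), with the SEQ variant using the $B_j$:
\begin{itemize}
\item In case (a), opponents' arcs grow and $A_i$ is fixed, so part (a) of \cref{lem:stabmono} gives $\Stab\subseteq\Stab'$, with equality for Nash.
\item In case (b) the roles of $G$ and $G'$ are swapped, so the same part gives $\Stab'\subseteq\Stab$, again with equality for Nash.
\item In case (c), $A_i$ grows with opponents fixed, so part (b) gives $\Stab'\subseteq\Stab$.
\item Case (d) is the mirror of (c) and gives $\Stab\subseteq\Stab'$.
\end{itemize}

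The main obstacle is the step showing that a for-disturbance affects the conjunction only through $f$ and monotonically. A conjunction over $\req(a)$ mixes several options, some touched and some not. The fix is to note, by \cref{lem:conj}, that the $f$-component of the warrant is the minimum of the $f$-components and the $a$-component is the maximum of the $a$-components. A for-disturbance leaves every $a$-component, and hence the warrant's $a$-component, fixed, and can only raise the minimum of the $f$-components. The corresponding argument for against-disturbances is symmetric.

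A second subtlety is that \cref{lem:stabmono} is stated for the improving arcs of opponents, while judgements for SEQ use the credible sets $B_j$. I will invoke its final sentence explicitly, noting that the SMR escape sets $R_i(s'')$ and $R_i^+(s)$ depend only on $A_i[\kappa^i]$, as the lemma requires.
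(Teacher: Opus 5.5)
Your proposal is correct and follows essentially the same route as the paper. \cref{lem:conj} and \cref{prop:truth} make the perceived arc sets change monotonically, and they stay equal on the unaffected side, which you justify explicitly via $\req(a)\subseteq O_j$ and the disjointness of the $O_j$. The unchanged intention assessment carries this over to the credible sets $B_j$, and \cref{lem:stabmono} then yields the four cases, including Nash equality in (a) and (b).
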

\begin{proof}
Under a for-disturbance each changed assessment has its $f$-component raised
and its $a$-component unchanged; by \cref{lem:conj} the same holds for every
warrant, and by \cref{prop:truth} every perceived arc remains perceived.
Hence $A_j[\kappa^i]\subseteq A_j[\kappa'^i]$ for every $j$, with equality
for the DMs whose options are unaffected. Since the intention assessment is
unchanged, it follows that $B_j[\kappa^i]\subseteq B_j[\kappa'^i]$ for every
$j$ as well.
For an against-disturbance the inclusions are reversed. The four cases
follow from \cref{lem:stabmono}.
\end{proof}

\begin{corollary}[Operator-specific direction]\label{cor:operator}
Let $\kappa^i\mapsto\kappa'^i$ be an arbitrary opponent-side disturbance of
the capability assessment. If $i$ uses $\delta_{\Nv\Yv}$, then
$\Stab\subseteq\Stab'$; if $i$ uses $\delta_{\Yv\Nv}$, then
$\Stab'\subseteq\Stab$. For own-side disturbances the inclusions are
reversed. For $\delta_{\Nv\Nv}$ and $\delta_{\Yv\Yv}$ no such conclusion holds
in general.
\end{corollary}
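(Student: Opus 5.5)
The plan is to reduce \cref{cor:operator} to \cref{thm:direction} by splitting an arbitrary disturbance into its for-part and its against-part. The key input is \cref{prop:sensitivity}: for $\delta_{\Nv\Yv}$ one of the two parts is invisible to every warrant, and so is every perceived arc. The same holds for $\delta_{\Yv\Nv}$ with the roles swapped. The remaining part has a determined direction, and \cref{thm:direction} applies to it.

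First, take $\delta_{\Nv\Yv}$ and an opponent-side capability disturbance $\psi^i\mapsto\psi'^i$ with $\psi^i\lek\psi'^i$. Define an intermediate assessment $\psi''$ that agrees with $\psi'^i$ in $f$-components and with $\psi^i$ in $a$-components. Then $\psi^i\mapsto\psi''$ is an opponent-side for-disturbance and $\psi''\mapsto\psi'^i$ is an opponent-side against-disturbance. For every arc $a$, \cref{lem:conj} shows that the $f$-component of $w_{\psi''}(a)$ equals that of $w_{\psi'^i}(a)$. By \cref{prop:sensitivity}, $\delta_{\Nv\Yv}$ depends only on $f$, so $A_j[\cdot]$ coincides under $\psi''$ and $\psi'^i$ for every $j$. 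The intention assessment is unchanged, so $B_j[\cdot]$ coincides as well, and the judgements coincide. \Cref{thm:direction}(a) applied to the first step then gives $\Stab\subseteq\Stab'$. The case $\delta_{\Yv\Nv}$ is dual: $\delta_{\Yv\Nv}$ depends only on $a$, so the for-part is invisible and \cref{thm:direction}(b) gives $\Stab'\subseteq\Stab$. For own-side disturbances the same decomposition works, using \cref{thm:direction}(c) and (d), which reverses the inclusions. In the own-side case $B_i$ is also determined by $A_i[\cdot]$ and the fixed preferences, so it too is unaffected by the invisible part.

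For the negative claim I would build small counterexamples in which both parts matter. Take a two-DM model with states $s,t,u$, an improvement $s\to t$ for $i$, and an opponent arc $t\to u$ with $s\succ_i u$ whose requirement is a single opponent option $o$. Under $\delta_{\Nv\Nv}$, the disturbance $\Uv\to\Yv$ on $o$ enables the sanction and makes $s$ GMR-stable where it was not. The disturbance $\Yv\to\Bv$ disables it and makes $s$ unstable. Both are opponent-side disturbances, so neither inclusion holds in general. For $\delta_{\Yv\Yv}$, use $\Nv\to\Bv$ (enables) and $\Uv\to\Nv$ (disables). The own-side failures come from the analogous model with the requirement placed on $i$'s arc $s\to t$.

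I expect the main obstacle to be the bookkeeping in the invisibility step. One must check that equality of the relevant component of every warrant, including the empty-requirement case where the warrant is $\Yv$, really yields equality of all perceived arc sets. One must also check that the SEQ judgement depends on $\kappa^i$ only through the sets $A_j[\kappa^i]$ and $B_j[\kappa^i]$.
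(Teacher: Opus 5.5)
Your proposal is correct and follows the paper's proof: you use the same split into a for-part and an against-part, \cref{lem:conj,prop:sensitivity} make one part invisible to $\delta_{\Nv\Yv}$ or $\delta_{\Yv\Nv}$, \cref{thm:direction} is applied to the other part, and your counterexample disturbances ($\Uv\to\Yv$ and $\Yv\to\Bv$ for $\delta_{\Nv\Nv}$; $\Nv\to\Bv$ and $\Uv\to\Nv$ for $\delta_{\Yv\Yv}$) are exactly those of \cref{ex:threat}. Your own-side counterexample goes slightly beyond what the paper's example covers. In your analogous model the failure shows up in the Nash judgement (or in SEQ, if the threat is not credible) rather than in GMR/SMR, because the now unconditional opponent arc $t\to u$ sanctions the improvement $s\to t$ whenever that improvement is perceived.
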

\begin{proof}
Decompose the disturbance into a for-part and an against-part. Under
$\delta_{\Nv\Yv}$ perceived arcs depend only on $f$-components
(\cref{lem:conj,prop:sensitivity}), so the against-part has no effect; the
case $\delta_{\Yv\Nv}$ is symmetric. \Cref{ex:threat} shows both directions
for the other two operators.
\end{proof}

\begin{theorem}[Direction for intention]\label{thm:credibility}
Let $\kappa^i\mapsto\kappa'^i$ be a disturbance that changes only the
intention assessment.
\begin{enumerate}[label=(\alph*)]
  \item $i$'s Nash, GMR and SMR judgements do not change.
  \item Under a for-disturbance, $\Stab^{\mathrm{SEQ}}_i[\kappa^i]\subseteq
        \Stab^{\mathrm{SEQ}}_i[\kappa'^i]$; under an against-disturbance the
        reverse inclusion holds.
  \item Under $\delta_{\Nv\Yv}$ every such disturbance can only enlarge the set
        of states judged SEQ; under $\delta_{\Yv\Nv}$ it can only shrink it.
\end{enumerate}
\end{theorem}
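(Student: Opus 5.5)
The plan is to trace exactly which parts of the perceived model depend on the intention assessment $\chi^i$. By \cref{def:perceived}, the perceived arcs $A_j[\kappa^i]$ depend only on $\delta_i$ and the warrants $w_{\psi^i}(a)$. So a disturbance that leaves $\psi^i$ fixed leaves every $A_j[\kappa^i]$, and hence the perceived model $G[\kappa^i]$, unchanged. Nash, GMR and SMR judgements are defined as stability in $G[\kappa^i]$; they use only $R_i$, $R_i^{+}$, $R_{N-i}$ and $\Phiw_i$, which are computed from the arcs and the (fixed) preferences. Part (a) follows immediately. The set $B_i[\kappa^i]$ consists of the improving arcs in $A_i[\kappa^i]$, so it is also unchanged. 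Only the sets $B_j[\kappa^i]$ for $j\neq i$ can move, and these enter only the SEQ judgement, through $R^{B}_{N-i}$.

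For (b), I would argue arc by arc. Under a for-disturbance of $\chi^i$, each $\chi^i(a)$ has its $f$-component raised and its $a$-component unchanged. By \cref{prop:truth}, $\delta_i(\chi^i(a))$ can only change from $\Nv$ to $\Yv$. Since $A_j[\kappa^i]=A_j[\kappa'^i]$, this gives $B_j[\kappa^i]\subseteq B_j[\kappa'^i]$ for every $j\neq i$, while $A_i$ and $B_i$ are fixed. The SEQ clause of \cref{lem:stabmono}(a), with credible arc sets in place of improving arcs, then yields $\Stab^{\mathrm{SEQ}}_i[\kappa^i]\subseteq\Stab^{\mathrm{SEQ}}_i[\kappa'^i]$. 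Concretely, the improvements $R_i^{+}(s)$ are unchanged, and by \cref{lem:reachmono} every $s''\in R^{B}_{N-i}(t)$ remains reachable, so every SEQ-sanction persists. For an against-disturbance, \cref{prop:truth} gives the reverse inclusions of the $B_j$, and the same lemma gives the reverse inclusion of stable sets.

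For (c), I would decompose an arbitrary intention disturbance into a for-part followed by an against-part, as described in the subsection on kinds of disturbance. Under $\delta_{\Nv\Yv}$, \cref{prop:sensitivity} says the reduced value depends only on the $f$-component, so the against-part leaves every $B_j$ unchanged. The composite therefore has the effect of the for-part alone, and (b) gives enlargement. Under $\delta_{\Yv\Nv}$ the reduced value depends only on the $a$-component, so the for-part is inert and (b) gives shrinkage.

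The only point needing care, though not a real obstacle, is checking that nothing in the SEQ definition other than $R^{B}_{N-i}$ depends on $\chi^i$. In particular, $i$'s own improvements must be drawn from $A_i[\kappa^i]$ with the true preferences, so they are untouched. Once that is in place, everything reduces to \cref{prop:truth}, \cref{prop:sensitivity} and \cref{lem:stabmono}.
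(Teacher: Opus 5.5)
Your proposal is correct and follows the paper's proof: (a) holds because $\chi^i$ enters neither the perceived arcs nor the Nash, GMR and SMR conditions; (b) applies \cref{prop:truth} arc by arc to the credible sets $B_j[\kappa^i]$ and then the SEQ clause of \cref{lem:stabmono}; and (c) combines (b) with \cref{prop:sensitivity}. Your explicit decomposition in (c) into an inert part and an effective part simply spells out what the paper states in one line.
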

\begin{proof}
(a) These judgements do not involve $\chi^i$. (b) By \cref{prop:truth}, a
for-disturbance can only enlarge each $B_j[\kappa^i]$, $j\neq i$, and an
against-disturbance only shrink it, while $A_i[\kappa^i]$ and the
improvements of $i$ are unchanged; conclude by \cref{lem:stabmono}.
(c) Follows from (b) and \cref{prop:sensitivity}.
\end{proof}

Capability and intention thus act on different parts of a stability
certificate. Assessments of capability determine which moves exist in the
perceived model. Through the focal DM's own options they affect its
improvements and hence every concept, Nash stability included; through the
opponents' options they affect the sanctioning paths and hence every
sanction-based concept, but not Nash stability. Assessments of intention
determine only which of the opponents' moves count as credible, and affect
only SEQ. The same two dimensions separate structural admissibility
from preferential credibility in the analysis of state exclusion
\cite[Sec.~6]{KatoExclusion}.

\begin{table}[H]
\centering
\small
\begin{tabular}{@{}>{\raggedright\arraybackslash}p{3.4cm}
                >{\raggedright\arraybackslash}p{4.8cm}
                >{\raggedright\arraybackslash}p{5.2cm}@{}}
\toprule
Assessment disturbed & Concepts affected & Counterpart in
\cite{KatoExclusion} \\
\midrule
Own capability ($\psi$ on $O_i$) & Nash, GMR, SMR, SEQ (available improvements and escapes)
  & --- \\
Opponents' capability ($\psi$ on $O\setminus O_i$) & GMR, SMR, SEQ; Nash unchanged (existence of sanctioning paths)
  & structural admissibility \\
Opponents' intention ($\chi$) & SEQ only (credibility of sanctioning paths)
  & preferential credibility \\
\bottomrule
\end{tabular}
\caption{Effects of capability and intention assessments on stability
concepts.}
\label{tab:dimensions}
\end{table}

For a state-preserving action $d$ of $k$, the direction of its effect on an
observer $i$ is obtained by decomposing the induced disturbance
$\kappa^i\mapsto\Gamma_i(d,\kappa^i)$ into its parts and applying
\cref{thm:direction,thm:credibility} to each.

\begin{proposition}[Judgements immune to every action]\label{prop:certificate}
Let $s\in S$, and let $\hat O(s)$ be the set of options in the requirements of
arcs of $G$ whose tail lies in $\hat D_G(s)$. If $\psi^i(o)$ is immune for
$\delta_i$ for every $o\in\hat O(s)$, and $\chi^i(a)$ is immune for $\delta_i$
for every arc $a$ of another DM with tail in $\hat D_G(s)$, then no
state-preserving action of any DM changes any of $i$'s stability
judgements at $s$.
\end{proposition}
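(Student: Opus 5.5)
The argument combines three earlier facts: the localization lemma, \cref{lem:immunewarrant}, and the definition of immunity. I will fix an arbitrary finite sequence of state-preserving actions; a single action suffices, and the sequence case then follows by induction. Throughout, the physical state stays at $s$, and $i$'s epistemic state moves from $\kappa^i$ to some $\kappa'^i$ with $\kappa^i\lek\kappa'^i$. This holds by \cref{ass:accumulation}, and it also covers the case where $i$ is itself the actor, since then $\kappa'^i=\kappa^i$. So it suffices to show that for every $\kappa'^i$ with $\kappa^i\lek\kappa'^i$, the perceived models $G[\kappa^i]$ and $G[\kappa'^i]$, together with their credible arc sets, agree on everything that determines stability at $s$.

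The first step concerns arcs with tail in $\hat D_G(s)$. For such an arc $a$ of any DM $j$, every $o\in\req(a)$ lies in $\hat O(s)$, so $\psi^i(o)$ is immune. \Cref{lem:immunewarrant} then gives $\delta_i(w_{\psi'^i}(a))=\delta_i(w_{\psi^i}(a))$, and hence $a\in A_j[\kappa^i]$ if and only if $a\in A_j[\kappa'^i]$. The second step concerns arcs $a$ of $j\neq i$ with tail in $\hat D_G(s)$. Here $\chi^i(a)$ is immune and $\chi^i(a)\lek\chi'^i(a)$, so $\delta_i(\chi'^i(a))=\delta_i(\chi^i(a))$, and membership in $B_j$ is preserved. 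For $j=i$, $B_i$ is determined by $A_i[\cdot]$ and the fixed preferences, so it is preserved as well. Preferences never change.

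The third step localizes. Perceived arcs form a subset of the arcs of $G$, so the forward-reachable set of $s$ in either perceived model is contained in $\hat D_G(s)$. By the first step, the two perceived models have the same arcs with tails in $\hat D_G(s)$. Hence their forward-reachable sets from $s$ coincide, call this common set $D$, and they have the same arcs with tails in $D$. The localization lemma \cite[Lemma~3.4, Cor.~3.5]{KatoExclusion} then gives that $i$'s Nash, GMR and SMR judgements at $s$ agree.

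The main obstacle is SEQ, because the judgement uses the sets $R^{B}_{N-i}$ built from credible arcs, and the cited localization lemma is stated only for the standard concepts. I would not invoke it directly for SEQ. Instead I would argue directly. Every state relevant to the SEQ condition at $s$ is reached from $s$ along perceived arcs and so lies in $D$: the improvements $t\in R_i^+(s)$, the legal sequences of credible arcs from such $t$, and the endpoints $s''$. Each step of such a sequence is an arc with tail in $D\subseteq\hat D_G(s)$, and by the second step these arcs have the same credibility status under $\kappa^i$ and $\kappa'^i$. An induction on the length of legal sequences then shows that $R^B_{N-i}(t)$ is the same under both epistemic states. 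The SEQ sanction predicates coincide, and so does the SEQ judgement at $s$.
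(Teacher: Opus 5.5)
Your proposal is correct and follows the paper's own proof. Arcs with tail in $\hat D_G(s)$ keep their perceived and credible status via \cref{lem:immunewarrant} and immunity of $\chi^i(a)$, the forward-reachable sets of $s$ in the two perceived models therefore coincide, and localization concludes; your direct induction for SEQ along the credible arcs $R^B_{N-i}$ simply makes explicit a step the paper covers by a bare appeal to the localization lemma.
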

\begin{proof}
By \cref{lem:immunewarrant} and \cref{prop:invariant}, every arc with tail in
$\hat D_G(s)$ keeps its status as perceived and as credible under every
disturbance. Perceived arcs are arcs of $G$, so the forward-reachable sets of
$s$ in the perceived models before and after the disturbance are both
contained in $\hat D_G(s)$, and they coincide with each other by induction on
path length. The localization lemma gives the claim.
\end{proof}

% ===============================================================
\section{Actions that enable provocation or deterrence}\label{sec:manip}
% ===============================================================

A state-preserving action does not change the physical state and therefore
is not an improvement in the underlying preference relation on $S$. Its
strategic value lies instead in the change it induces in the stability
judgements of others. We define this change
without assuming how the others then choose among their moves; the choice of
responses is treated in \cite{KatoAnticipation}.

\begin{definition}[Provocation- and deterrence-enabling actions]\label{def:manip}
Let $k\neq i$, $s\in S$, $d\in\Act_k(s)$, $\sigma\in\cX$, and
$\kappa'^i=\Gamma_i(d,\kappa^i)$. Write $R_i^{+}[\kappa^i](s)$ for $i$'s
improvements from $s$ in $G[\kappa^i]$.
\begin{enumerate}[label=(\alph*)]
  \item $d$ is \emph{provocation-enabling} for $k$ against $i$ at $s$ if
        $s\in\Stab^\sigma_i[\kappa^i]\setminus\Stab^\sigma_i[\kappa'^i]$ and
        some $t\in R_i^{+}[\kappa'^i](s)$ satisfies $t\succ_k s$.
  \item $d$ is \emph{deterrence-enabling} for $k$ against $i$ at $s$ if
        $s\in\Stab^\sigma_i[\kappa'^i]\setminus\Stab^\sigma_i[\kappa^i]$ and
        $s\succ_k t$ for every $t\in R_i^{+}[\kappa^i](s)$.
\end{enumerate}
\end{definition}

A provocation-enabling action turns a state that $i$ judges stable into one
it judges unstable, where at least one of $i$'s improvements is welcome to
$k$; whether $i$ takes that improvement is not asserted. A
deterrence-enabling action turns a state that $i$ judges unstable into one
it judges stable, where $k$ prefers the state to every improvement $i$ was
contemplating. The true availability of the options and the true
preferences of the DMs concerned play no role: an action works through
$i$'s interpretation, so bluffs are included.

A part of a disturbance of $i$ is \emph{effective} if applying it changes at
least one of $i$'s perceived arc sets $A_j[\kappa^i]$ or perceived credible
arc sets $B_j[\kappa^i]$.

\begin{corollary}[Channels]\label{cor:channels}
Decompose the disturbance induced by $d$ into its parts. Then:
\begin{enumerate}[label=(\alph*)]
  \item A provocation-enabling action induces at least one effective part
        that is an opponent-side against-part (of capability or intention)
        or an own-side for-part of capability; a deterrence-enabling action
        induces at least one effective part that is an opponent-side
        for-part or an own-side against-part of capability.
  \item Under $\delta_{\Nv\Yv}$ against-parts are never effective, so
        provocation can be enabled only through evidence for $i$'s own
        options, and deterrence only through evidence for its opponents'
        options or intentions.
  \item Under $\delta_{\Yv\Nv}$ for-parts are never effective, so provocation
        can be enabled only through evidence against its opponents' options
        or intentions, and deterrence only through evidence against its own
        options.
  \item Under $\delta_{\Nv\Nv}$ and $\delta_{\Yv\Yv}$ all channels are possible.
\end{enumerate}
In particular, if $k$'s actions can generate in $i$ only evidence about
options and intentions of DMs other than $i$, then an observer using
$\delta_{\Nv\Yv}$ can be subjected to deterrence-enabling but never to
provocation-enabling actions, and an observer using $\delta_{\Yv\Nv}$ the
reverse.
\end{corollary}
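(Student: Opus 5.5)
The plan for (a) is a contrapositive argument. Decompose the disturbance $\kappa^i\mapsto\kappa'^i$ into its parts: for/against, opponent-side/own-side, and capability/intention. Apply them one at a time. This gives a chain $\kappa^i=\kappa_0\lek\kappa_1\lek\dots\lek\kappa_m=\kappa'^i$, where each step applies one part. A part that is not effective leaves every $A_j$ and $B_j$ unchanged, so by \cref{def:perceived} it leaves every judgement unchanged.

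Now suppose $d$ is provocation-enabling, so that $s\in\Stab\setminus\Stab'$. Suppose also that every effective part is an opponent-side for-part (of capability or intention) or an own-side against-part of capability. By \cref{thm:direction}(a),(d) and \cref{thm:credibility}(b), each such step satisfies $\Stab^\sigma_i[\kappa_{l}]\subseteq\Stab^\sigma_i[\kappa_{l+1}]$. The non-effective steps give equality. Chaining these yields $\Stab\subseteq\Stab'$, which contradicts $s\notin\Stab'$. So some effective part is an opponent-side against-part or an own-side for-part of capability.

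Own-side intention parts need no separate case. $\chi^i$ is defined only on $A_{-i}$, so every intention part is opponent-side. The deterrence case is dual: if $s\in\Stab'\setminus\Stab$, the remaining kinds of part all give $\Stab'\subseteq\Stab$ by \cref{thm:direction}(b),(c) and \cref{thm:credibility}(b). The extra conditions in \cref{def:manip} on $t\succ_k s$ are not needed for (a).

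One point needs care. \cref{thm:direction} is stated for disturbances that are purely opponent-side or purely own-side. A capability change touching options on both sides must therefore be split further, which the decomposition in the subsection on kinds of disturbance allows, since the components are raised independently. I expect this bookkeeping to be the only real obstacle: each intermediate profile must be a legitimate epistemic state, and effectiveness must be judged at the intermediate state where the part is applied rather than at $\kappa^i$. I would state the definition of effectiveness in this path-dependent way, following the remark after the decomposition about intermediate perceived models.

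For (b), I would show that under $\delta_{\Nv\Yv}$ perceived arcs depend only on $f$-components. For warrants this follows from \cref{lem:conj} and \cref{prop:sensitivity}, and for credibility it follows directly from \cref{prop:sensitivity}. An against-part leaves all $f$-components fixed, so it is never effective. Plugging this into (a) leaves exactly the channels stated. Part (c) is symmetric, using the $a$-component.

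For (d), I would exhibit, for each channel and each of $\delta_{\Nv\Nv}$ and $\delta_{\Yv\Yv}$, a small model in which that channel is effective and enables the relevant change. The simplest instances are a single opponent sanction arc whose warrant or intention moves between $\Uv$, $\Yv$, $\Nv$ and $\Bv$, using the value flips found in the proof of \cref{prop:sensitivity}. For provocation, the improvement must be preferred by $k$, which can be arranged by choosing $k$'s preference. I would reuse \cref{ex:threat} where it applies.

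The final ``in particular'' then follows directly from (b) and (c), because own-side parts never occur in such disturbances.
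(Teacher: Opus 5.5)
Your proposal is correct and follows the paper's proof. Fixing an order of the parts, non-effective steps change nothing, and each effective step moves the set of states judged stable in the direction given by \cref{thm:direction,thm:credibility}; so a state can leave (enter) that set only at an effective shrinking (enlarging) step, and (b), (c) then follow from \cref{prop:sensitivity} as you describe. Your plan for (d) is more thorough than the paper's, which cites only \cref{ex:threat} and so witnesses only the opponent-side capability channels; that same example also supplies the positive half of the final ``in particular'' clause (that such observers \emph{can} be subjected to deterrence-enabling, respectively provocation-enabling, actions), which (b) and (c) alone do not give.
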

\begin{proof}
The updates of the parts commute on evidence pairs and lead to the same
final epistemic state, whatever the order; fix any order of application. By
\cref{thm:direction,thm:credibility}, opponent-side for-parts
and own-side against-parts can only enlarge the set of states $i$ judges
stable, and the other parts can only shrink it. Parts that are not effective
change nothing. Applying the parts one after another in the fixed order, a
state can leave the
set of states judged stable only at a step that applies an effective
shrinking part, and enter it only at a step that applies an effective
enlarging part. Parts (b) and (c) follow
from \cref{prop:sensitivity}, and (d) from \cref{ex:threat}.
\end{proof}

\begin{example}[A threat and its assessment]\label{ex:threat}
Let $N=\{1,2,3\}$ and $S=\{s,t,u\}$. DM~1 can move from $s$ to $t$
unconditionally; DM~2 can move from $t$ to $u$, a threat whose requirement is
an option $o\in O_2$; DM~3 has no moves. Preferences satisfy
$t\succ_1 s\succ_1 u$, $s\succ_2 t$ and $t\succ_3 s$; the other comparisons
play no role. The graph is shown below; both arcs are irreversible, and the
dotted arc is the one whose availability DM~1 assesses.
\[
\begin{tikzpicture}[x=1cm,y=1cm,>={Stealth[length=1.8mm]},
  st/.style={circle,draw,inner sep=1pt,minimum size=6mm}]
  \node[st] (s) at (0,0) {$s$};
  \node[st] (t) at (2.4,0) {$t$};
  \node[st] (u) at (4.8,0) {$u$};
  \draw[->,thick] (s) -- node[above,font=\small]{DM~1} (t);
  \draw[->,thick,densely dotted] (t) -- node[above,font=\small]{DM~2 ($o$)} (u);
\end{tikzpicture}
\]
DM~1 judges $s$
GMR-stable exactly when it perceives the threat, that is, when
$\delta_1(\psi^1(o))=\Yv$; the same holds for SMR, since DM~1 has no move from
$u$. DM~2 has an action $d_2$ (an exercise) that DM~1 reads as evidence for
$o$, and DM~3 an action $d_3$ (a leak) that DM~1 reads as evidence against.

\begin{table}[H]
\centering
\small
\begin{tabular}{@{}>{\raggedright\arraybackslash}p{3.0cm}
                >{\raggedright\arraybackslash}p{3.3cm}
                >{\raggedright\arraybackslash}p{3.6cm}
                >{\raggedright\arraybackslash}p{3.6cm}@{}}
\toprule
$\psi^1(o)$ & Action and induced change &
$\delta_1(\Bv)=\Yv$ ($\delta_{\Nv\Yv}$, $\delta_{\Yv\Yv}$) &
$\delta_1(\Bv)=\Nv$ ($\delta_{\Yv\Nv}$, $\delta_{\Nv\Nv}$) \\
\midrule
$\Nv$ (threat not believed) & $d_2$: $\Nv\to\Bv$ &
threat perceived; $d_2$ is deterrence-enabling for DM~2 &
no effect \\
$\Yv$ (threat believed) & $d_3$: $\Yv\to\Bv$ &
no effect &
threat no longer perceived; $d_3$ is provocation-enabling for DM~3 \\
\bottomrule
\end{tabular}
\caption{Effect of state-preserving actions on DM~1's judgement at $s$.}
\label{tab:threat}
\end{table}

Starting from $\Uv$, evidence for ($\Uv\to\Yv$) makes the threat perceived
under $\delta_{\Nv\Yv}$ and $\delta_{\Nv\Nv}$, and evidence against
($\Uv\to\Nv$) removes it under $\delta_{\Yv\Yv}$ and $\delta_{\Yv\Nv}$. Under
$\delta_{\Nv\Nv}$ both deterrence ($\Uv\to\Yv$) and provocation ($\Yv\to\Bv$)
can be enabled, and under $\delta_{\Yv\Yv}$ both deterrence ($\Nv\to\Bv$) and
provocation ($\Uv\to\Nv$).
\end{example}

As an implementation check of the definitions, rather than as additional
evidence for the proofs, the operator-level facts of \cref{sec:operators}
were recomputed exhaustively over $\FOUR$, and \cref{thm:direction} was tested
on 3000 randomly
generated three-DM graph models with option requirements and complete,
possibly intransitive preferences, for all four operators, all four
stability concepts and random for- and against-disturbances on both sides,
with no violation in $2.4\times10^{5}$ checks; \cref{thm:credibility,thm:hedging}
were tested on 2000 further models with no violation in $2\times10^{5}$
checks.

% ===============================================================
\section{Types and hedging}\label{sec:hedging}
% ===============================================================

Observers often assess a party not move by move but by asking which side it
is on. We model this by candidate types and derive intention assessments
from evidence about types.

\begin{definition}[Types, live types and derived intention]\label{def:types}
Let $k\neq i$. A set of \emph{candidate types} of $k$ is a finite set
$\Theta_k$ of complete relations on $S$, each a preference that $k$ might
have. A \emph{type assessment} of $i$ is a map $\tau\colon\Theta_k\to\FOUR$,
and the \emph{live types} are $L=\{\theta\in\Theta_k:\delta_i(\tau(\theta))=\Yv\}$.
The live types induce the intention assessment of $k$'s moves
\[
  \chi_L(u,v)=(f,a),\qquad
  \begin{aligned}
  f=1&\iff v\succ_\theta u \text{ for some } \theta\in L,\\
  a=1&\iff v\succ_\theta u \text{ fails for some } \theta\in L.
  \end{aligned}
\]
\end{definition}

The observer applies its operator twice: to decide which types are live, and
to decide which of $k$'s moves are credible. If no candidate type is live,
the induced intention assessment is $\Uv$ for every move of $k$, representing
the absence of any surviving basis for a judgement either way; the
observer's operator then decides how such moves are treated. Evidence against a type matters
as far as the observer's operator lets it: an assessment $\Yv$ that becomes
$\Bv$ keeps the type live under $\delta_i(\Bv)=\Yv$ and removes it under
$\delta_i(\Bv)=\Nv$. If exactly one type $\theta$ is live, we say that $k$ has
\emph{revealed} $\theta$ to $i$; if exactly two types $\theta_1,\theta_2$ are
live, $k$ is \emph{hedging} between them in the eyes of $i$.

\begin{proposition}[What hedging contests]\label{prop:hedgecontest}
If $k$ hedges between $\theta_1$ and $\theta_2$, then $\chi_L(u,v)$ is $\Yv$ if
$v$ is an improvement over $u$ under both types, $\Nv$ if under neither, and
$\Bv$ if under exactly one; the contested assessments are absorbing under
\cref{ass:accumulation}.
\end{proposition}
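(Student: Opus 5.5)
The plan is to compute $\chi_L(u,v)$ directly from \cref{def:types} with $L=\{\theta_1,\theta_2\}$, and then to obtain absorption from \cref{prop:invariant}. For a fixed move $(u,v)$ of $k$, let $m\in\{0,1,2\}$ be the number of live types $\theta\in L$ with $v\succ_\theta u$.

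The definition gives $f=1$ iff $m\ge1$, and $a=1$ iff $m\le1$. This holds because $a=1$ requires some live type under which $v\succ_\theta u$ fails, and $L$ has exactly two elements. Reading off the cases:
\begin{itemize}[leftmargin=*]
  \item $m=2$ gives $(1,0)=\Yv$;
  \item $m=0$ gives $(0,1)=\Nv$;
  \item $m=1$ gives $(1,1)=\Bv$.
\end{itemize}
These are exactly the three cases of the statement. The one point that needs care is that $L$ consists of exactly two \emph{distinct} types, so ``under exactly one'' is meaningful. The computation uses nothing about the types beyond membership in $L$. In particular, it does not matter that the relations in $\Theta_k$ may be intransitive.

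For the absorption claim, I will note that a contested assessment is $\Bv$, the top of the knowledge order. Its only upper set is $\{\Bv\}$, so by \cref{prop:invariant} it is immune for every operator. Under \cref{ass:accumulation}, interpretation maps never lower an assessment in $\lek$. Any further evidence therefore leaves it equal to $\Bv$, which is what absorbing means.

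The main obstacle is interpretive rather than technical. The intention assessment here is derived from $L$, and $L$ itself may change when the type assessment $\tau$ receives further evidence. I will therefore state absorption for $\chi$ read as an assessment subject to accumulation. If needed, I will also remark on the derived case. Suppose further evidence keeps both types live, for instance because both $\tau(\theta_j)$ are immune. Then $\chi_L$ is literally unchanged, and the disputed moves stay contested.

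This remark matters because evidence could also kill a type, for example $\Yv\to\Bv$ under $\delta_i(\Bv)=\Nv$. In that case the derived value can become definite. I would flag this as outside the claim, which concerns accumulation on the assessment itself.
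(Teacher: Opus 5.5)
Your proof is correct and follows the paper's own argument: it reads the three cases off \cref{def:types} with $L=\{\theta_1,\theta_2\}$, and it obtains absorption from the immunity of $\Bv$ in \cref{prop:invariant} together with \cref{ass:accumulation}. One small point in your optional side remark: ``literally unchanged'' requires $L$ itself to be unchanged, not merely both types to stay live; if a further type became live, the disputed moves would still remain $\Bv$, because $f$ and $a$ are both defined existentially over $L$ and hence can only rise as $L$ grows.
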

\begin{proof}
Immediate from \cref{def:types} with $L=\{\theta_1,\theta_2\}$ and from
\cref{prop:invariant}.
\end{proof}

\begin{theorem}[Hedging]\label{thm:hedging}
Let $i$ use $\delta_i$ and let $k\neq i$ hedge between $\theta_1$ and
$\theta_2$. Let $\mathcal{J}_{\mathrm{hedge}}$, $\mathcal{J}_{\theta_1}$ and
$\mathcal{J}_{\theta_2}$ be the sets of states $i$ judges SEQ for itself when
$k$ hedges, reveals $\theta_1$ or reveals $\theta_2$, that is, when the live
set is $\{\theta_1,\theta_2\}$, $\{\theta_1\}$ or $\{\theta_2\}$, all other
assessments being fixed.
\begin{enumerate}[label=(\alph*)]
  \item If $\delta_i(\Bv)=\Yv$, $k$'s perceived credible arcs under hedging are
        the union of those under the two revelations, and hedging weakly
        expands $i$'s set of SEQ-stable states:
        $\mathcal{J}_{\theta_1}\cup\mathcal{J}_{\theta_2}\subseteq
        \mathcal{J}_{\mathrm{hedge}}$.
  \item If $\delta_i(\Bv)=\Nv$, they are the intersection, and hedging weakly
        shrinks it: $\mathcal{J}_{\mathrm{hedge}}\subseteq
        \mathcal{J}_{\theta_1}\cap\mathcal{J}_{\theta_2}$.
\end{enumerate}
\end{theorem}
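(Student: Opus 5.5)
The plan is to compare, arc by arc, $k$'s perceived credible arcs in the three situations and then pass to stability through \cref{lem:stabmono}. Only the intention assessment of $k$'s moves differs between the three cases; the capability assessment is fixed. Hence the perceived arc sets $A_j[\kappa^i]$ for every $j$, the improvements of $i$, and the credible sets $B_j$ for $j\notin\{i,k\}$ are identical in all three cases. Only $B_k$ can vary, and it is always $A_k[\kappa^i]\cap\{a:\delta_i(\chi(a))=\Yv\}$ for the relevant $\chi$.

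First I would compute $\chi_L$ for the three live sets. For a revelation $L=\{\theta\}$, \cref{def:types} gives $\chi_{\{\theta\}}(u,v)=\Yv$ if $v\succ_\theta u$ and $\Nv$ otherwise. So $\delta_i(\chi_{\{\theta\}}(u,v))=\Yv$ if and only if $v\succ_\theta u$, whatever the operator. For $L=\{\theta_1,\theta_2\}$, \cref{prop:hedgecontest} gives the values $\Yv$ (both types), $\Nv$ (neither) and $\Bv$ (exactly one). Applying $\delta_i$:
\begin{itemize}
  \item if $\delta_i(\Bv)=\Yv$, the move is credible iff $v\succ_{\theta_1}u$ or $v\succ_{\theta_2}u$;
  \item if $\delta_i(\Bv)=\Nv$, it is credible iff both hold.
\end{itemize}
Intersecting with the fixed $A_k[\kappa^i]$, the hedging credible set $B_k$ is the union of the two revealed ones in case (a) and their intersection in case (b). This is the first assertion of each part.

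Next I would apply \cref{lem:stabmono}(a) in its SEQ form, with credible sets $B_j$ for $j\neq i$ and with $A_i$ and all other $B_j$ equal. In case (a), $B_k^{\theta_m}\subseteq B_k^{\mathrm{hedge}}$ for $m=1,2$, so $\mathcal{J}_{\theta_m}\subseteq\mathcal{J}_{\mathrm{hedge}}$, and taking the union gives the claim. In case (b), $B_k^{\mathrm{hedge}}\subseteq B_k^{\theta_m}$ for each $m$, so $\mathcal{J}_{\mathrm{hedge}}$ is contained in each $\mathcal{J}_{\theta_m}$ and hence in their intersection.

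The main obstacle is to confirm that "all other assessments being fixed" really isolates the change to $B_k$. In particular:
\begin{itemize}
  \item the capability assessment must determine $A_k[\kappa^i]$ identically in all three cases;
  \item the derived $\chi_L$ must replace $\chi^i$ only on arcs of $k$;
  \item the SEQ reachability sets $R^{B}_{N-i}$ must depend on $B_k$ monotonically, as covered by the final clause of \cref{lem:reachmono}.
\end{itemize}
Once this is checked, no ordering or accumulation issue arises, since the three configurations are compared directly rather than reached from one another.
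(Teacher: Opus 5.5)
Your proposal is correct and follows the paper's proof. Both arguments compute the derived intention assessments via \cref{prop:hedgecontest}: disputed moves are $\Bv$ under hedging and $\Yv$ or $\Nv$ under the revelations. Both then read off that $B_k$ is the union or the intersection according to $\delta_i(\Bv)$, and conclude with the SEQ form of \cref{lem:stabmono}(a); your check that only $B_k$ varies makes explicit the bookkeeping the paper leaves implicit.
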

\begin{proof}
By \cref{prop:hedgecontest}, the intention assessments under hedging and
under the revelations differ only on disputed moves, which are $\Bv$ under
hedging and $\Yv$ under one revelation and $\Nv$ under the other. The
statements on arcs follow, and the inclusions follow from the monotonicity
of SEQ judgements in credible arcs (\cref{lem:stabmono}).
\end{proof}

\Cref{thm:hedging} compares the observer's judgements, not the hedger's
payoffs, and makes no assumption about $k$'s true preferences.

\begin{example}[Two powers and a hedging party]\label{ex:hedge}
There are three DMs, $N=\{P_1,P_2,H\}$, and five states,
$S=\{s,t_1,t_2,u_1,u_2\}$, with status quo $s$. Power $P_1$ can move from $s$ to $t_1$ and power $P_2$ from $s$ to $t_2$, each
an encroachment it prefers. The third party $H$ can respond to $t_1$ by
moving to $u_1$, siding with $P_2$, and to $t_2$ by moving to $u_2$, siding
with $P_1$. Preferences are
\[
t_1\succ_{P_1}s\succ_{P_1}t_2\succ_{P_1}u_2\succ_{P_1}u_1,\qquad
t_2\succ_{P_2}s\succ_{P_2}t_1\succ_{P_2}u_1\succ_{P_2}u_2,
\]
and $s\succ_H t_1\succ_H t_2\succ_H u_1\succ_H u_2$: $H$ ranks $s$ first and
prefers staying at $t_1$ and $t_2$ to siding with either power. The graph is
shown below; all arcs are irreversible.
\[
\begin{tikzpicture}[x=1cm,y=1cm,>={Stealth[length=1.8mm]},
  st/.style={circle,draw,inner sep=1pt,minimum size=6.5mm}]
  \node[st] (s) at (0,0) {$s$};
  \node[st] (t1) at (-2.4,0) {$t_1$};
  \node[st] (t2) at (2.4,0) {$t_2$};
  \node[st] (u1) at (-4.8,0) {$u_1$};
  \node[st] (u2) at (4.8,0) {$u_2$};
  \draw[->,thick] (s) -- node[above,font=\small]{$P_1$} (t1);
  \draw[->,thick] (s) -- node[above,font=\small]{$P_2$} (t2);
  \draw[->,thick,densely dotted] (t1) -- node[above,font=\small]{$H$} (u1);
  \draw[->,thick,densely dotted] (t2) -- node[above,font=\small]{$H$} (u2);
\end{tikzpicture}
\]
The candidate types of $H$ are $\theta_A$, under which siding with $P_1$ is an
improvement and siding with $P_2$ is not, and $\theta_B$, the reverse.

\begin{table}[H]
\centering
\small
\begin{tabular}{@{}lcc@{}}
\toprule
& observers with $\delta(\Bv)=\Yv$ & observers with $\delta(\Bv)=\Nv$ \\
\midrule
$H$ reveals $\theta_A$ & $P_2$ judges $s$ SEQ, $P_1$ not & $P_2$ judges $s$ SEQ, $P_1$ not \\
$H$ reveals $\theta_B$ & $P_1$ judges $s$ SEQ, $P_2$ not & $P_1$ judges $s$ SEQ, $P_2$ not \\
$H$ hedges & both judge $s$ SEQ & neither \\
\bottomrule
\end{tabular}
\caption{SEQ judgements of the status quo in \cref{ex:hedge}.}
\label{tab:hedge}
\end{table}

Against observers who read contradiction as credibility, hedging makes both
powers judge the status quo SEQ-stable, although in the true model it is
SEQ-stable for neither, since $H$'s responses are not improvements for $H$.
It is GMR-stable for both in the true model, because GMR asks only whether
$H$ can respond, not whether it would.
\end{example}

% ===============================================================
\section{Illustration: the 1995 DVD format negotiation}\label{sec:dvd}
% ===============================================================

\subsection{Background}

In late 1994 two incompatible high-density optical disc formats were
announced: the Multimedia Compact Disc (MMCD) of Philips and Sony, and the
Super Density (SD) disc of Toshiba, Matsushita, Time Warner and others
\cite{Taylor1998,Wolpin2015}. On 3 May 1995, technical representatives of
Apple, Compaq, Hewlett-Packard, IBM and Microsoft publicly urged the two
camps to converge on a single format, reporting that they had met each camp
separately in April and urged it to present one \cite{Apple1995}. The group,
in which IBM played a prominent role, was open to further computer
companies, and a near-contemporaneous account names additional members
\cite{Apple1995,Doyle1996}; we refer to it as the Technical Working Group
(TWG). Both camps regarded its support as critical for the adoption of
their format in personal computers \cite{Wolpin2015}. A general agreement on the basic specifications of a single
format was reached on 15 September 1995
\cite{Nokia1995,Toshiba1995,UPI1995,LAT1995}, and the final details were
announced on 8 December 1995 \cite{Toshiba1995}.

The accounts agree that the TWG, with IBM prominent, exerted pressure on
both camps, but they differ on the signalling in the summer of 1995. A later
retrospective reports that by late July IBM let the camps know that it was
ready to endorse SD, and that the SD camp, which regarded its format as the
likely winner, was reluctant until IBM informed both camps in September that
it would endorse the compromise \cite{Wolpin2015}. A near-contemporaneous
account reports audience-specific communications: Sony and Philips were
told that IBM was about to adopt the longer SD format and were given a few
weeks to reach a compromise, while Toshiba's designers were told that
Sony's modulation and storage techniques were preferable \cite{Doyle1996}.

We use the case to illustrate the distinction between capability and
intention, not to establish the causes of the outcome. As shown below, the
framework accommodates either account of the signalling.

\subsection{Model}

Following the usual presentation of a graph model, we specify the decision
makers and their options, the feasible states, the transitions with their
reversibility, and the preferences.

\paragraph{Decision makers and options.}
The DMs are the MMCD camp $M$, the SD camp $S$ and the TWG $T$, with the
options listed in \cref{tab:dvdoptions}. The TWG's public position was a call
for a single format \cite{Apple1995}, and its formal stance was initially to
endorse neither format and to issue non-binding recommendations; its
influence rested on the de facto market power of its members
\cite{Wolpin2015}. The option $B$ stylises this as an explicit refusal of
support for a format launched by one camp alone; it is not a literal
boycott, which the sources do not report.

\begin{table}[H]
\centering
\small
\begin{tabular}{@{}lll@{}}
\toprule
DM & Option & Meaning \\
\midrule
$M$ & $L_M$ & launch the MMCD format unilaterally \\
    & $A_M$ & accept a unified format \\
$S$ & $L_S$ & launch the SD format unilaterally \\
    & $A_S$ & accept a unified format \\
$T$ & $E_M$ & endorse MMCD \\
    & $E_S$ & endorse SD \\
    & $B$   & refuse support to any single format launched unilaterally \\
    & $E_C$ & endorse a compromise accepted by a camp \\
\bottomrule
\end{tabular}
\caption{Decision makers and options of the DVD model.}
\label{tab:dvdoptions}
\end{table}

\paragraph{Feasible states.}
A camp takes at most one of its options: accepting a unified format
supersedes a launch. The TWG takes at most one of its options. In addition,
$E_C$ requires that some camp has accepted, and $T$ does not endorse a format
whose own camp has accepted a unified format. When both camps have
accepted, the format is unified and the conflict is resolved; all such
configurations are identified with a single state, in which no statement of
$T$ is recorded. These constraints leave $33$ feasible states, numbered as in
\cref{fig:dvdgraph} and listed in option form in \cref{tab:dvdstates}
(\cref{app:dvd}). The status quo is state~$1$, in which no option is taken,
and the unified format is state~$33$.

\paragraph{Transitions and reversibility.}
\Cref{fig:dvdoptions} shows the option-level transitions of each DM. A camp
can launch, or accept directly, from the status quo, and can accept after
launching. The TWG can make one statement, while the format is not unified.
All transitions are irreversible: no launch, acceptance or statement can be
withdrawn. \Cref{fig:dvdgraph} shows the resulting integrated graph on the
33 feasible states, with $88$ arcs.

\begin{figure}[H]
\centering
\begin{tikzpicture}[x=1cm,y=1cm,>={Stealth[length=1.8mm]},
  op/.style={draw,rounded corners,inner sep=2.5pt,font=\small}]
\begin{scope}
  \node[font=\small] at (1.3,1.1) {$M$};
  \node[op] (m0) at (0,0) {none};
  \node[op] (m1) at (2.6,0) {$L_M$};
  \node[op] (m2) at (1.3,-1.4) {$A_M$};
  \draw[->,thick,blue!70!black] (m0)--(m1);
  \draw[->,thick,blue!70!black] (m0)--(m2);
  \draw[->,thick,blue!70!black] (m1)--(m2);
\end{scope}
\begin{scope}[shift={(4.6,0)}]
  \node[font=\small] at (1.3,1.1) {$S$};
  \node[op] (s0) at (0,0) {none};
  \node[op] (s1) at (2.6,0) {$L_S$};
  \node[op] (s2) at (1.3,-1.4) {$A_S$};
  \draw[->,thick,red!70!black,dashed] (s0)--(s1);
  \draw[->,thick,red!70!black,dashed] (s0)--(s2);
  \draw[->,thick,red!70!black,dashed] (s1)--(s2);
\end{scope}
\begin{scope}[shift={(10.2,0)}]
  \node[font=\small] at (1.5,1.1) {$T$};
  \node[op] (t0) at (1.5,0) {none};
  \node[op] (t1) at (0,-1.4) {$E_M$};
  \node[op] (t2) at (1,-1.4) {$E_S$};
  \node[op] (t3) at (2,-1.4) {$B$};
  \node[op] (t4) at (3,-1.4) {$E_C$};
  \foreach \k in {1,2,3,4} \draw[->,thick,black!70,densely dotted] (t0)--(t\k);
\end{scope}
\end{tikzpicture}
\caption{Option-level transitions of the three DMs. Every arrow is one-way:
all transitions are irreversible. $E_M$ ($E_S$) is unavailable once $M$
($S$) has accepted, $E_C$ requires that a camp has accepted, and no DM moves
once the format is unified.}
\label{fig:dvdoptions}
\end{figure}

\begin{figure}[p]
\centering
\pgfdeclarelayer{bg}\pgfsetlayers{bg,main}
\begin{tikzpicture}[x=1cm,y=1cm,>={Stealth[length=1.6mm]},st/.style={circle,draw,inner sep=0.5pt,minimum size=5.2mm,font=\scriptsize,fill=white}]
\node[font=\small] at (0.0,1.9) {$M$: none};
\node[font=\small] at (5.3,1.9) {$M$: launched};
\node[font=\small] at (10.6,1.9) {$M$: accepted};
\node[font=\small,rotate=90] at (-2.3,0.0) {$S$: none};
\node[font=\small,rotate=90] at (-2.3,-4.2) {$S$: launched};
\node[font=\small,rotate=90] at (-2.3,-8.4) {$S$: accepted};
\draw[gray!50,rounded corners] (-1.85,-1.45) rectangle (1.85,1.45);
\draw[gray!50,rounded corners] (-1.85,-5.65) rectangle (1.85,-2.75);
\draw[gray!50,rounded corners] (-1.85,-9.85) rectangle (1.85,-6.95);
\draw[gray!50,rounded corners] (3.4499999999999997,-1.45) rectangle (7.15,1.45);
\draw[gray!50,rounded corners] (3.4499999999999997,-5.65) rectangle (7.15,-2.75);
\draw[gray!50,rounded corners] (3.4499999999999997,-9.85) rectangle (7.15,-6.95);
\draw[gray!50,rounded corners] (8.75,-1.45) rectangle (12.45,1.45);
\draw[gray!50,rounded corners] (8.75,-5.65) rectangle (12.45,-2.75);
\draw[gray!50,rounded corners] (8.75,-9.85) rectangle (12.45,-6.95);
\node[st] (n1) at (0.00,0.00) {1};
\node[st] (n2) at (-1.30,0.95) {2};
\node[st] (n3) at (1.30,0.45) {3};
\node[st] (n4) at (-0.55,-0.95) {4};
\node[st] (n5) at (0.00,-4.20) {5};
\node[st] (n6) at (-1.30,-3.25) {6};
\node[st] (n7) at (1.30,-3.75) {7};
\node[st] (n8) at (-0.55,-5.15) {8};
\node[st] (n9) at (0.00,-8.40) {9};
\node[st] (n10) at (-1.30,-7.45) {10};
\node[st] (n11) at (-0.55,-9.35) {11};
\node[st] (n12) at (0.70,-9.00) {12};
\node[st] (n13) at (5.30,0.00) {13};
\node[st] (n14) at (4.00,0.95) {14};
\node[st] (n15) at (6.60,0.45) {15};
\node[st] (n16) at (4.75,-0.95) {16};
\node[st] (n17) at (5.30,-4.20) {17};
\node[st] (n18) at (4.00,-3.25) {18};
\node[st] (n19) at (6.60,-3.75) {19};
\node[st] (n20) at (4.75,-5.15) {20};
\node[st] (n21) at (5.30,-8.40) {21};
\node[st] (n22) at (4.00,-7.45) {22};
\node[st] (n23) at (4.75,-9.35) {23};
\node[st] (n24) at (6.00,-9.00) {24};
\node[st] (n25) at (10.60,0.00) {25};
\node[st] (n26) at (11.90,0.45) {26};
\node[st] (n27) at (10.05,-0.95) {27};
\node[st] (n28) at (11.30,-0.60) {28};
\node[st] (n29) at (10.60,-4.20) {29};
\node[st] (n30) at (11.90,-3.75) {30};
\node[st] (n31) at (10.05,-5.15) {31};
\node[st] (n32) at (11.30,-4.80) {32};
\node[st] (n33) at (10.60,-8.40) {33};
\begin{pgfonlayer}{bg}
\draw[->,black!70,densely dotted,thick] (n1) to (n2);
\draw[->,black!70,densely dotted,thick] (n1) to (n3);
\draw[->,black!70,densely dotted,thick] (n1) to (n4);
\draw[->,black!70,densely dotted,thick] (n5) to (n6);
\draw[->,black!70,densely dotted,thick] (n5) to (n7);
\draw[->,black!70,densely dotted,thick] (n5) to (n8);
\draw[->,black!70,densely dotted,thick] (n9) to (n10);
\draw[->,black!70,densely dotted,thick] (n9) to (n11);
\draw[->,black!70,densely dotted,thick] (n9) to (n12);
\draw[->,black!70,densely dotted,thick] (n13) to (n14);
\draw[->,black!70,densely dotted,thick] (n13) to (n15);
\draw[->,black!70,densely dotted,thick] (n13) to (n16);
\draw[->,black!70,densely dotted,thick] (n17) to (n18);
\draw[->,black!70,densely dotted,thick] (n17) to (n19);
\draw[->,black!70,densely dotted,thick] (n17) to (n20);
\draw[->,black!70,densely dotted,thick] (n21) to (n22);
\draw[->,black!70,densely dotted,thick] (n21) to (n23);
\draw[->,black!70,densely dotted,thick] (n21) to (n24);
\draw[->,black!70,densely dotted,thick] (n25) to (n26);
\draw[->,black!70,densely dotted,thick] (n25) to (n27);
\draw[->,black!70,densely dotted,thick] (n25) to (n28);
\draw[->,black!70,densely dotted,thick] (n29) to (n30);
\draw[->,black!70,densely dotted,thick] (n29) to (n31);
\draw[->,black!70,densely dotted,thick] (n29) to (n32);
\draw[->,blue!70!black,thick] (n1) to (n13);
\draw[->,blue!70!black,thick,bend left=16,opacity=0.75] (n1) to (n25);
\draw[->,blue!70!black,thick] (n2) to (n14);
\draw[->,blue!70!black,thick] (n3) to (n15);
\draw[->,blue!70!black,thick,bend left=16,opacity=0.75] (n3) to (n26);
\draw[->,blue!70!black,thick] (n4) to (n16);
\draw[->,blue!70!black,thick,bend left=16,opacity=0.75] (n4) to (n27);
\draw[->,blue!70!black,thick] (n5) to (n17);
\draw[->,blue!70!black,thick,bend left=16,opacity=0.75] (n5) to (n29);
\draw[->,blue!70!black,thick] (n6) to (n18);
\draw[->,blue!70!black,thick] (n7) to (n19);
\draw[->,blue!70!black,thick,bend left=16,opacity=0.75] (n7) to (n30);
\draw[->,blue!70!black,thick] (n8) to (n20);
\draw[->,blue!70!black,thick,bend left=16,opacity=0.75] (n8) to (n31);
\draw[->,blue!70!black,thick] (n9) to (n21);
\draw[->,blue!70!black,thick,bend left=16,opacity=0.75] (n9) to (n33);
\draw[->,blue!70!black,thick] (n10) to (n22);
\draw[->,blue!70!black,thick,bend left=16,opacity=0.75] (n10) to (n33);
\draw[->,blue!70!black,thick] (n11) to (n23);
\draw[->,blue!70!black,thick,bend left=16,opacity=0.75] (n11) to (n33);
\draw[->,blue!70!black,thick] (n12) to (n24);
\draw[->,blue!70!black,thick,bend left=16,opacity=0.75] (n12) to (n33);
\draw[->,blue!70!black,thick] (n13) to (n25);
\draw[->,blue!70!black,thick] (n15) to (n26);
\draw[->,blue!70!black,thick] (n16) to (n27);
\draw[->,blue!70!black,thick] (n17) to (n29);
\draw[->,blue!70!black,thick] (n19) to (n30);
\draw[->,blue!70!black,thick] (n20) to (n31);
\draw[->,blue!70!black,thick] (n21) to (n33);
\draw[->,blue!70!black,thick] (n22) to (n33);
\draw[->,blue!70!black,thick] (n23) to (n33);
\draw[->,blue!70!black,thick] (n24) to (n33);
\draw[->,red!70!black,thick,dashed] (n1) to (n5);
\draw[->,red!70!black,thick,dashed,bend right=16,opacity=0.75] (n1) to (n9);
\draw[->,red!70!black,thick,dashed] (n2) to (n6);
\draw[->,red!70!black,thick,dashed,bend right=16,opacity=0.75] (n2) to (n10);
\draw[->,red!70!black,thick,dashed] (n3) to (n7);
\draw[->,red!70!black,thick,dashed] (n4) to (n8);
\draw[->,red!70!black,thick,dashed,bend right=16,opacity=0.75] (n4) to (n11);
\draw[->,red!70!black,thick,dashed] (n5) to (n9);
\draw[->,red!70!black,thick,dashed] (n6) to (n10);
\draw[->,red!70!black,thick,dashed] (n8) to (n11);
\draw[->,red!70!black,thick,dashed] (n13) to (n17);
\draw[->,red!70!black,thick,dashed,bend right=16,opacity=0.75] (n13) to (n21);
\draw[->,red!70!black,thick,dashed] (n14) to (n18);
\draw[->,red!70!black,thick,dashed,bend right=16,opacity=0.75] (n14) to (n22);
\draw[->,red!70!black,thick,dashed] (n15) to (n19);
\draw[->,red!70!black,thick,dashed] (n16) to (n20);
\draw[->,red!70!black,thick,dashed,bend right=16,opacity=0.75] (n16) to (n23);
\draw[->,red!70!black,thick,dashed] (n17) to (n21);
\draw[->,red!70!black,thick,dashed] (n18) to (n22);
\draw[->,red!70!black,thick,dashed] (n20) to (n23);
\draw[->,red!70!black,thick,dashed] (n25) to (n29);
\draw[->,red!70!black,thick,dashed,bend right=16,opacity=0.75] (n25) to (n33);
\draw[->,red!70!black,thick,dashed] (n26) to (n30);
\draw[->,red!70!black,thick,dashed,bend right=16,opacity=0.75] (n26) to (n33);
\draw[->,red!70!black,thick,dashed] (n27) to (n31);
\draw[->,red!70!black,thick,dashed,bend right=16,opacity=0.75] (n27) to (n33);
\draw[->,red!70!black,thick,dashed] (n28) to (n32);
\draw[->,red!70!black,thick,dashed,bend right=16,opacity=0.75] (n28) to (n33);
\draw[->,red!70!black,thick,dashed] (n29) to (n33);
\draw[->,red!70!black,thick,dashed] (n30) to (n33);
\draw[->,red!70!black,thick,dashed] (n31) to (n33);
\draw[->,red!70!black,thick,dashed] (n32) to (n33);
\end{pgfonlayer}
\begin{scope}[shift={(0,-10.9)}]
\draw[->,blue!70!black,thick] (-1.8,0)--(-0.8,0) node[right,font=\small,black]{move of $M$};
\draw[->,red!70!black,thick,dashed] (2.6,0)--(3.6,0) node[right,font=\small,black]{move of $S$};
\draw[->,black!70,densely dotted,thick] (7.0,0)--(8.0,0) node[right,font=\small,black]{move of $T$};
\end{scope}
\end{tikzpicture}
\caption{Integrated graph of the DVD model. States are grouped by the
positions of the two camps (columns: $M$; rows: $S$); within a group, the
centre position is the state in which $T$ has made no statement, and the
other positions correspond to $E_M$ (upper left), $E_S$ (right), $B$ (lower
left) and $E_C$ (lower right). All arcs are irreversible. The status quo is
state~$1$ and the unified format state~$33$; the option form of every state
is given in \cref{tab:dvdstates}.}
\label{fig:dvdgraph}
\end{figure}

\paragraph{Preferences.}
Each camp most prefers to launch its format with the TWG's endorsement and
least prefers the rival's endorsed launch; unification lies in between and
above the status quo; a launch without endorsement is valued above the
status quo by the camp that launches (first-mover hope) but below it if the TWG
refuses support. The complete rankings, which are weak orders, are given in
\cref{tab:dvdstates}; the principles from which they are constructed, and
the comparisons on which the results depend, are set out in
\cref{app:prefbasis}. The TWG's capability to endorse or to refuse support is not in
doubt; what the camps assess is its intention, through five candidate types,
whose rankings are also given in \cref{tab:dvdstates}: $\theta_B$
(\emph{blocker}: refuses support to any single launch and otherwise does nothing),
$\theta_U$ (\emph{enforcer}: refuses support to any single launch and endorses a
compromise once one camp has accepted it), $\theta_{US}$ (an enforcer that
endorses SD if the stalemate persists), and the partisans $\theta_S$ and
$\theta_M$. The analyst takes $T$'s true preferences to be those of
$\theta_U$.

\paragraph{The arcs that matter.}
The analysis below uses few of the arcs. Write $\ell_M$ (state~$13$) for the
state in which $M$ has launched and nobody else has moved, and $b_M$, $e^S_M$
and $e^M_M$ (states $16$, $15$ and $14$) for the states reached from $\ell_M$
when $T$ refuses support, endorses SD or endorses MMCD; $\ell_S$, $b_S$, $e^M_S$ and
$e^S_S$ (states $5$, $8$, $6$ and $7$) are defined symmetrically. The relevant
arcs are
\[
  s_0\xrightarrow{M}\ell_M,\qquad
  \ell_M\xrightarrow{T}b_M,\quad
  \ell_M\xrightarrow{T}e^S_M,\quad
  \ell_M\xrightarrow{T}e^M_M,
\]
and symmetrically for $S$, with
$\ell_M\succ_M s_0\succ_M b_M,\,e^S_M$ and $e^M_M\succ_M s_0$. Thus $M$'s
launch is an improvement; it has a GMR-sanction under every assessment,
because $T$ can move to $b_M$ or $e^S_M$, both of which satisfy
$s_0\succ_M\cdot$; and it has an SEQ-sanction exactly when one of these
sanctioning moves is also an improvement for $T$ under a live type: under
$\theta_B$, $\theta_U$ and $\theta_{US}$ (refusal) and under $\theta_S$
(endorsement of SD), but not under $\theta_M$, whose best response is
$e^M_M$. The same holds for $S$ with the roles of the partisans exchanged.
The rival camp's responses do not change this: launching its own format in
reply starts a format war, which is a GMR-sanction but not an SEQ-sanction,
since it is not an improvement for the camp that would start it; accepting
after the rival's launch does not lower the launcher's position below the
status quo and is no sanction at all.

\subsection{Capability and intention in the case}

\Cref{tab:dvdstatic} shows whether each camp judges the status quo stable
against its own unilateral launch, under SEQ and under GMR, for various sets
of live TWG types.

\begin{table}[H]
\centering
\small
\begin{tabular}{@{}lcc@{}}
\toprule
Live TWG types & $M$ judges $s_0$ stable (SEQ / GMR) & $S$ judges $s_0$ stable (SEQ / GMR) \\
\midrule
$\{\theta_B\}$, $\{\theta_U\}$ or $\{\theta_{US}\}$ & yes / yes & yes / yes \\
$\{\theta_S\}$ & yes / yes & \textbf{no} / yes \\
$\{\theta_M\}$ & \textbf{no} / yes & yes / yes \\
$\{\theta_M,\theta_S\}$ & yes / yes & yes / yes \\
\bottomrule
\end{tabular}
\caption{Stability judgements of the status quo $s_0$ against unilateral
launch.}
\label{tab:dvdstatic}
\end{table}

Under GMR every camp judges the status quo stable under every assessment,
because the TWG can always endorse the rival format or refuse support; GMR asks
whether the TWG \emph{can} sanction and cannot distinguish the phases of the
negotiation. Under SEQ the judgement depends on whether the sanctioning move
is one the TWG \emph{would} make, that is, on the assessment of its
intention. The row $\{\theta_M,\theta_S\}$ is an instance of
\cref{thm:hedging}: a TWG that appears as a possible partisan of either side
deters both, for observers who read contradiction as credibility.

\subsection{State-preserving actions and their interpretations}

The TWG's public statement of May 1995, its communications in the summer
and its September communication change nothing in the physical state: no
format is launched, endorsed or accepted by them. They are state-preserving
actions of $T$, and the sources suggest the following interpretations.
\begin{itemize}[leftmargin=*]
  \item \emph{Public call for a single format} (May 1995): read by both camps
        as evidence for $\theta_B$. Both camps judge the status quo
        SEQ-stable; neither launches.
  \item \emph{Summer communications}. Under the later account, a common
        action $d$ observed by both camps is read differently:
        $\Gamma_M(d,\kappa^M)$ generates evidence for $\theta_{US}$, an
        enforcer with an SD fallback, and $\Gamma_S(d,\kappa^S)$ evidence for
        $\theta_S$, partisan support. Under the near-contemporaneous
        account, $T$ performs distinct actions $d_M\neq d_S$: $d_M$, observed
        by $M$ only, generates evidence for $\theta_{US}$ in $M$, while $d_S$
        conveys technical information to $S$, whose reading of $T$ as a likely
        supporter of SD rests on its own expectation of victory. Either way,
        the readings are $\{\theta_{US}\}$ for $M$ and $\{\theta_S\}$ for $S$.
        With $\{\theta_S\}$ live, $S$ no longer judges the status quo
        SEQ-stable against its own launch (\cref{tab:dvdstatic}): its launch
        has no credible sanction, consistent with its regarding SD as the
        likely winner. $M$ continues to judge the status quo stable.
  \item \emph{September communication}: an action observed by $S$ that makes
        an enforcer type live for it; $S$ again judges the status quo
        SEQ-stable against its own launch.
\end{itemize}
The two accounts of the summer correspond to the two ways in which
\cref{def:action} lets the effect of actions differ across observers:
observer-specific interpretation of a common action, and selective
disclosure through distinct actions. Why $M$ then accepted a unified
format, which is worse for it than the status quo in itself, requires an
account of how DMs anticipate the evolution of the conflict; it is given in
\cite{KatoAnticipation}.

\subsection{Limitations}

The dates and the content of the public statements rest on primary sources
\cite{Apple1995,Nokia1995,Toshiba1995} and contemporary reports
\cite{UPI1995,LAT1995}. The private communications of the summer of 1995 are
known only from a near-contemporaneous account, which reports them at second
hand \cite{Doyle1996}, and from a later retrospective \cite{Wolpin2015}. The
preferences (\cref{app:prefbasis}) and the readings are reconstructions. The Hollywood studios,
whose advisory committee had set requirements of its own
\cite{Wolpin2015}, are left out, and each camp and the TWG are treated as
single DMs.

% ===============================================================
\section{Discussion}
% ===============================================================

\paragraph{Interpretation.}
The four reduction operators become profiles of vulnerability to
manipulation. A DM that reads any evidence for an option as availability
($\delta_{\Nv\Yv}$) can be made more cautious but cannot be emboldened by
talk about its opponents; a DM that reads any evidence against as
unavailability ($\delta_{\Yv\Nv}$) can be emboldened but not made more
cautious; the other two can be moved either way. Under monotone
accumulation, contradiction is absorbing, so injecting contradictory
evidence is effective only once and the ambiguity created by hedging
persists. A party that makes its capabilities ambiguous acts on every
sanction-based judgement of its opponents; a party that makes only its
intentions ambiguous acts on credibility alone.

\paragraph{Limitations and further work.}
Relating the conditions under which manipulation of preferences fails
\cite{Inohara1997} to those under which manipulation of assessments succeeds
is an open question.
Evidence is assumed to accumulate monotonically; retraction, correction and
the discrediting of sources would allow contradictory assessments to be
resolved and require a different treatment. Physical moves are assumed to
leave assessments unchanged. Higher-order assessments (what a DM believes
about another's assessment) and coalitional stability
\cite{Kilgour2001,Inohara2008} are not treated; for coalitions, disturbances
raise the questions whether members share assessments and whether an action
can target a coalition as a whole. A third epistemic layer, in which a
state-preserving action is perceived by others as a change of physical
state, would model feints in the narrow sense. The distinction between
inaction and state-preserving actions also suggests a categorical treatment
in which the latter are non-identity endomorphisms \cite{KatoCGMCR}. How DMs
anticipate the evolution of the conflict under their assessments, and how an
actor can plan a sequence of state-preserving actions, is developed in
\cite{KatoAnticipation}. Finally, all results are finite and relational,
which makes them natural targets for formal verification alongside the
existing Lean development of \cite{KatoQCW}.

% ===============================================================
\appendix
% ===============================================================
\section{States and preferences of the DVD model}\label{app:dvd}
% ===============================================================

\subsection{States and rankings}

\Cref{tab:dvdstates} lists the feasible states of the DVD model in option
form and gives each DM's ranking: $1$ is the most preferred, and states with
the same number are equally preferred. The columns for $T$ give the rankings
of the five candidate types; the true type is $\theta_U$.

{\footnotesize
\setlength{\tabcolsep}{3.2pt}
\begin{longtable}{@{}r|cc|cc|cccc|cc|ccccc@{}}
\caption{Feasible states (option form) and preference rankings of the DVD
model. Y: the option is taken; --: it is not.}\label{tab:dvdstates}\\
\toprule
& \multicolumn{2}{c|}{$M$} & \multicolumn{2}{c|}{$S$} & \multicolumn{4}{c|}{$T$} &
\multicolumn{2}{c|}{camps} & \multicolumn{5}{c}{types of $T$}\\
State & $L_M$ & $A_M$ & $L_S$ & $A_S$ & $E_M$ & $E_S$ & $B$ & $E_C$ &
$M$ & $S$ & $\theta_B$ & $\theta_U$ & $\theta_{US}$ & $\theta_S$ & $\theta_M$\\
\midrule
\endfirsthead
\toprule
State & $L_M$ & $A_M$ & $L_S$ & $A_S$ & $E_M$ & $E_S$ & $B$ & $E_C$ &
$M$ & $S$ & $\theta_B$ & $\theta_U$ & $\theta_{US}$ & $\theta_S$ & $\theta_M$\\
\midrule
\endhead
1 & -- & -- & -- & -- & -- & -- & -- & -- & 6 & 6 & 3 & 4 & 4 & 4 & 4 \\
2 & -- & -- & -- & -- & Y & -- & -- & -- & 6 & 6 & 4 & 5 & 5 & 6 & 2 \\
3 & -- & -- & -- & -- & -- & Y & -- & -- & 6 & 6 & 4 & 5 & 3 & 2 & 6 \\
4 & -- & -- & -- & -- & -- & -- & Y & -- & 6 & 6 & 3 & 4 & 4 & 5 & 5 \\
5 & -- & -- & Y & -- & -- & -- & -- & -- & 8 & 5 & 5 & 6 & 6 & 5 & 5 \\
6 & -- & -- & Y & -- & Y & -- & -- & -- & 5 & 10 & 6 & 7 & 7 & 6 & 2 \\
7 & -- & -- & Y & -- & -- & Y & -- & -- & 12 & 1 & 6 & 7 & 7 & 1 & 6 \\
8 & -- & -- & Y & -- & -- & -- & Y & -- & 6 & 9 & 2 & 3 & 3 & 5 & 5 \\
9 & -- & -- & -- & Y & -- & -- & -- & -- & 3 & 7 & 3 & 4 & 4 & 4 & 4 \\
10 & -- & -- & -- & Y & Y & -- & -- & -- & 2 & 10 & 4 & 5 & 5 & 6 & 2 \\
11 & -- & -- & -- & Y & -- & -- & Y & -- & 3 & 7 & 3 & 4 & 4 & 5 & 5 \\
12 & -- & -- & -- & Y & -- & -- & -- & Y & 10 & 3 & 4 & 2 & 2 & 5 & 5 \\
13 & Y & -- & -- & -- & -- & -- & -- & -- & 5 & 8 & 5 & 6 & 6 & 5 & 5 \\
14 & Y & -- & -- & -- & Y & -- & -- & -- & 1 & 12 & 6 & 7 & 7 & 6 & 1 \\
15 & Y & -- & -- & -- & -- & Y & -- & -- & 10 & 5 & 6 & 7 & 7 & 2 & 6 \\
16 & Y & -- & -- & -- & -- & -- & Y & -- & 9 & 6 & 2 & 3 & 3 & 5 & 5 \\
17 & Y & -- & Y & -- & -- & -- & -- & -- & 10 & 10 & 5 & 6 & 6 & 5 & 5 \\
18 & Y & -- & Y & -- & Y & -- & -- & -- & 5 & 12 & 6 & 7 & 7 & 6 & 1 \\
19 & Y & -- & Y & -- & -- & Y & -- & -- & 12 & 5 & 6 & 7 & 7 & 1 & 6 \\
20 & Y & -- & Y & -- & -- & -- & Y & -- & 11 & 11 & 2 & 3 & 3 & 5 & 5 \\
21 & Y & -- & -- & Y & -- & -- & -- & -- & 5 & 8 & 5 & 6 & 6 & 5 & 5 \\
22 & Y & -- & -- & Y & Y & -- & -- & -- & 1 & 12 & 6 & 7 & 7 & 6 & 1 \\
23 & Y & -- & -- & Y & -- & -- & Y & -- & 9 & 6 & 2 & 3 & 3 & 5 & 5 \\
24 & Y & -- & -- & Y & -- & -- & -- & Y & 10 & 5 & 6 & 7 & 7 & 5 & 5 \\
25 & -- & Y & -- & -- & -- & -- & -- & -- & 7 & 3 & 3 & 4 & 4 & 4 & 4 \\
26 & -- & Y & -- & -- & -- & Y & -- & -- & 10 & 2 & 4 & 5 & 5 & 2 & 6 \\
27 & -- & Y & -- & -- & -- & -- & Y & -- & 7 & 3 & 3 & 4 & 4 & 5 & 5 \\
28 & -- & Y & -- & -- & -- & -- & -- & Y & 3 & 10 & 4 & 2 & 2 & 5 & 5 \\
29 & -- & Y & Y & -- & -- & -- & -- & -- & 8 & 5 & 5 & 6 & 6 & 5 & 5 \\
30 & -- & Y & Y & -- & -- & Y & -- & -- & 12 & 1 & 6 & 7 & 7 & 1 & 6 \\
31 & -- & Y & Y & -- & -- & -- & Y & -- & 6 & 9 & 2 & 3 & 3 & 5 & 5 \\
32 & -- & Y & Y & -- & -- & -- & -- & Y & 5 & 10 & 6 & 7 & 7 & 5 & 5 \\
33 & -- & Y & -- & Y & -- & -- & -- & -- & 4 & 4 & 1 & 1 & 1 & 3 & 3 \\
\bottomrule
\end{longtable}
}

\subsection{Construction of the rankings}\label{app:prefbasis}

The rankings in \cref{tab:dvdstates} were not elicited from the parties;
they are the analyst's reconstruction. They are built in two steps. Each
state is classified by the positions of the two camps (not moved, launched
or accepted the compromise) and by the statement of $T$ (none, endorsement
of MMCD, of SD or of the compromise, or refusal of support). A small number of ordinal
principles, stated below, then order these classes. The two camps are
treated symmetrically: the same principles apply to each, with the roles of
\emph{own} and \emph{rival} exchanged, so that $S$'s rank of a state equals
$M$'s rank of the mirror state. \Cref{tab:dvdcamp} gives the resulting
ranks of the camps.

{\footnotesize
\setlength{\tabcolsep}{4pt}
\begin{longtable}{@{}ll|ccccc@{}}
\caption{Ranks of the camps by situation, from the viewpoint of the own
camp. Each cell gives the rank (1: most preferred) and, in brackets, the
state number when the own camp is $M$ / when it is $S$; --: infeasible.}
\label{tab:dvdcamp}\\
\toprule
& & \multicolumn{5}{c}{Statement of $T$}\\
Own camp & Rival camp & none & endorses own & endorses rival & refusal &
endorses compromise\\
\midrule
not moved & not moved & 6\,{\scriptsize[1/1]} & 6\,{\scriptsize[2/3]} & 6\,{\scriptsize[3/2]} & 6\,{\scriptsize[4/4]} & -- \\
not moved & launched & 8\,{\scriptsize[5/13]} & 5\,{\scriptsize[6/15]} & 12\,{\scriptsize[7/14]} & 6\,{\scriptsize[8/16]} & -- \\
not moved & accepted & 3\,{\scriptsize[9/25]} & 2\,{\scriptsize[10/26]} & -- & 3\,{\scriptsize[11/27]} & 10\,{\scriptsize[12/28]} \\
launched & not moved & 5\,{\scriptsize[13/5]} & 1\,{\scriptsize[14/7]} & 10\,{\scriptsize[15/6]} & 9\,{\scriptsize[16/8]} & -- \\
launched & launched & 10\,{\scriptsize[17/17]} & 5\,{\scriptsize[18/19]} & 12\,{\scriptsize[19/18]} & 11\,{\scriptsize[20/20]} & -- \\
launched & accepted & 5\,{\scriptsize[21/29]} & 1\,{\scriptsize[22/30]} & -- & 9\,{\scriptsize[23/31]} & 10\,{\scriptsize[24/32]} \\
accepted & not moved & 7\,{\scriptsize[25/9]} & -- & 10\,{\scriptsize[26/10]} & 7\,{\scriptsize[27/11]} & 3\,{\scriptsize[28/12]} \\
accepted & launched & 8\,{\scriptsize[29/21]} & -- & 12\,{\scriptsize[30/22]} & 6\,{\scriptsize[31/23]} & 5\,{\scriptsize[32/24]} \\
\multicolumn{2}{@{}l}{unified (both accepted)} & \multicolumn{5}{c}{4\,{\scriptsize[33/33]}} \\
\bottomrule
\end{longtable}
}

\paragraph{Principles for the camps.}
Write $c$ for the own camp and $r$ for the rival.
\begin{enumerate}[label=(C\arabic*),leftmargin=*]
  \item \emph{Support of the TWG.} A format needed the TWG's support for use
        in personal computers \cite{Apple1995}. Hence, for fixed positions of
        the camps, an endorsement of $c$'s position is better for $c$ than
        silence, and an endorsement of $r$'s position worse; a refusal of
        support for $c$'s launch is worse than silence and a refusal for
        $r$'s launch better. In a format war, a refusal harms both launchers and lies
        between silence and the endorsement of the rival. An endorsement of
        the compromise counts as support for the camp that has accepted it.
  \item \emph{Statements without products.} When neither camp has moved,
        a statement of $T$ concerns no product on the market and leaves the
        camps' ranks unchanged (rank $6$ throughout).
  \item \emph{Endorsed sole format.} The best outcome for $c$ is its own
        format on the market with the TWG's endorsement and no rival format
        (rank $1$); the worst is the symmetric outcome for $r$, together with
        a format war in which $T$ endorses $r$ (rank $12$).
  \item \emph{First-mover hope.} An unendorsed launch of $c$'s format,
        without a rival format, is better for $c$ than the status quo
        (rank $5$ against $6$), but one refused support by the TWG is worse (rank $9$). This
        reflects the expectation of each camp that its format could win;
        the SD camp is reported to have regarded its format as the likely
        winner \cite{Wolpin2015}.
  \item \emph{Unification.} A unified format is better for each camp than the
        status quo and than an unendorsed launch, and worse than an endorsed
        sole format and than the rival's unilateral abandonment of its
        format (rank $4$). That both camps finally agreed
        \cite{Nokia1995,Toshiba1995} is consistent with unification ranking
        above the status quo; its position relative to an unendorsed launch
        is a modelling assumption.
  \item \emph{Format war.} A format war without an endorsement is worse for
        each camp than the status quo (rank $10$); with the endorsement of
        $c$'s format it is as good for $c$ as an unendorsed sole launch
        (rank $5$).
  \item \emph{Concession.} Accepting the compromise while the rival holds
        out is worse for $c$ than the status quo (rank $7$) unless $T$
        endorses the compromise. If $T$ does, $c$ is ranked above
        unification (rank $3$) and $r$ near the bottom (rank $10$): the camp
        that accepted first has the TWG's backing in the negotiation on the
        content of the unified format, while the hold-out is isolated.
        Correspondingly, the rival's unilateral concession without such an
        endorsement is good for $c$ (ranks $2$--$3$). This principle is a
        modelling assumption; the sources report the pressure on the
        hold-out camp \cite{Wolpin2015,Doyle1996} but not the camps'
        valuations.
\end{enumerate}

\paragraph{Rankings of the TWG types.}
The rankings of the five candidate types follow from their definitions in
\cref{sec:dvd}; they describe what a camp may believe about $T$, not
evidence about $T$'s actual preferences.
\begin{itemize}[leftmargin=*]
  \item $\theta_B$ (\emph{blocker}): unification is best and the refusal of
        support for a launched format next (rank $2$); then come the states without a
        launch, in which silence and refusal (rank $3$) are preferred to
        endorsements (rank $4$); a launch left unanswered (rank $5$) and an
        endorsed launch (rank $6$) are worst.
  \item $\theta_U$ (\emph{enforcer}): as $\theta_B$, except that endorsing
        the compromise after one camp has accepted it is ranked second only
        to unification (rank $2$).
  \item $\theta_{US}$: as $\theta_U$, except that endorsing SD while no format
        has been launched or accepted is ranked with the refusal of support for a launch
        (rank $3$): the
        enforcer's fallback when the stalemate persists.
  \item $\theta_S$ and $\theta_M$ (\emph{partisans}): endorsing the
        favoured format is best when that format has been launched (rank $1$)
        and next best otherwise (rank $2$); unification is ranked third, silence
        while no format has been launched fourth, and an endorsement of the
        other format last.
\end{itemize}

\paragraph{Which comparisons the results use.}
The stability judgements of \cref{tab:dvdstatic} rest on a few comparisons,
those stated in \cref{sec:dvd}: (i) $\ell_M\succ_M s_0\succ_M b_M,\,e^S_M$
and $e^M_M\succ_M s_0$ (C1, C4); (ii) the best response of each type of $T$
at $\ell_M$: refusal of support for $\theta_B,\theta_U,\theta_{US}$, endorsement of SD
for $\theta_S$ and of MMCD for $\theta_M$; (iii) the rival's responses at
$\ell_M$ are not improvements for it, since a launch in reply leads to a
format war (rank $8$ to $10$ for the rival, C6) and an acceptance leaves its
rank unchanged ($8$), and an acceptance does not lower the launcher's rank
below that of the status quo ($5$ against $6$); (iv) after $e^M_M$ the rival
is at its lowest rank, so it has no improving move; and the symmetric
comparisons for $S$. Any rankings that satisfy these comparisons give the
same judgements; in particular, the remaining choices (the exact position
of unification, the principle (C7) and the ties) do not affect them. The
anticipatory analysis of \cite{KatoAnticipation} uses further comparisons,
notably (C5) and (C7).

\sloppy


\begin{thebibliography}{99}

\bibitem{Aljefri2018}
Y.~M. Aljefri, M.~A. Bashar, L.~Fang, and K.~W. Hipel,
``First-level hypergame for investigating misperception in conflicts,''
\emph{IEEE Trans. Syst. Man Cybern.: Syst.}, vol.~48, no.~12,
pp.~2158--2175, 2018.

\bibitem{Belnap1977}
N.~D. Belnap,
``A useful four-valued logic,''
in \emph{Modern Uses of Multiple-Valued Logic}, J.~M. Dunn and G.~Epstein,
Eds. Dordrecht: Reidel, 1977, pp.~5--37.

\bibitem{Brams1977}
S.~J. Brams,
``Deception in $2\times 2$ games,''
\emph{Conflict Manag. Peace Sci.}, vol.~2, no.~2, pp.~171--203, 1977.

\bibitem{CrawfordSobel1982}
V.~P. Crawford and J.~Sobel,
``Strategic information transmission,''
\emph{Econometrica}, vol.~50, no.~6, pp.~1431--1451, 1982.

\bibitem{Dunn1976}
J.~M. Dunn,
``Intuitive semantics for first-degree entailments and `coupled trees',''
\emph{Philos. Stud.}, vol.~29, pp.~149--168, 1976.

\bibitem{FHK1993}
L.~Fang, K.~W. Hipel, and D.~M. Kilgour,
\emph{Interactive Decision Making: The Graph Model for Conflict Resolution}.
New York: Wiley, 1993.

\bibitem{FraserHipel1979}
N.~M. Fraser and K.~W. Hipel,
``Solving complex conflicts,''
\emph{IEEE Trans. Syst. Man Cybern.}, vol.~9, no.~12, pp.~805--816, 1979.

\bibitem{FraserHipel1984}
N.~M. Fraser and K.~W. Hipel,
\emph{Conflict Analysis: Models and Resolutions}.
New York: North-Holland, 1984.

\bibitem{Ginsberg1988}
M.~L. Ginsberg,
``Multivalued logics: A uniform approach to reasoning in artificial
intelligence,''
\emph{Comput. Intell.}, vol.~4, no.~3, pp.~265--316, 1988.

\bibitem{Howard1971}
N.~Howard,
\emph{Paradoxes of Rationality: Theory of Metagames and Political Behavior}.
Cambridge, MA: MIT Press, 1971.

\bibitem{Inohara2008}
T.~Inohara and K.~W. Hipel,
``Coalition analysis in the graph model for conflict resolution,''
\emph{Syst. Eng.}, vol.~11, no.~4, pp.~343--359, 2008.

\bibitem{Inohara1997}
T.~Inohara, S.~Takahashi, and B.~Nakano,
``Impossibility of deception in a conflict among subjects with
interdependent preference,''
\emph{Appl. Math. Comput.}, vol.~81, no.~2--3, pp.~221--244, 1997.

\bibitem{Kato2022}
Y.~Kato,
``State definition for conflict analysis with four-valued logic,''
in \emph{Proc. IEEE Int. Conf. Syst. Man Cybern.}, 2022, pp.~3186--3191,
doi:10.1109/SMC53654.2022.9945371.

\bibitem{KatoAnticipation}
Y.~Kato,
``Anticipation and disturbance plans in the graph model for conflict
resolution,'' companion manuscript, 2026.

\bibitem{KatoCGMCR}
Y.~Kato,
``Categorical approach to conflict resolution: A corrected correspondence
between category theory and the graph model for conflict resolution,''
arXiv:2306.13961v2, 2026.

\bibitem{KatoExclusion}
Y.~Kato,
``Genuine and spurious equilibria in the graph model for conflict
resolution under effective state exclusion,''
manuscript, version~3.3.

\bibitem{KatoQCW}
Y.~Kato,
``A four-valued graph model for conflict resolution: Core framework and a
machine-checked formalization in Lean~4,'' arXiv:2609.11174, 2026.

\bibitem{Kilgour2001}
D.~M. Kilgour, K.~W. Hipel, X.~Peng, and L.~Fang,
``Coalition analysis in group decision support,''
\emph{Group Decis. Negot.}, vol.~10, no.~2, pp.~159--175, 2001.

\bibitem{KHF1987}
D.~M. Kilgour, K.~W. Hipel, and L.~Fang,
``The graph model for conflicts,''
\emph{Automatica}, vol.~23, no.~1, pp.~41--55, 1987.

\bibitem{Nash1950}
J.~F. Nash,
``Equilibrium points in $n$-person games,''
\emph{Proc. Natl. Acad. Sci. USA}, vol.~36, no.~1, pp.~48--49, 1950.

\bibitem{Nash1951}
J.~Nash,
``Non-cooperative games,''
\emph{Ann. Math.}, vol.~54, no.~2, pp.~286--295, 1951.

\bibitem{Obeidi2009}
A.~Obeidi, D.~M. Kilgour, and K.~W. Hipel,
``Perceptual graph model systems,''
\emph{Group Decis. Negot.}, vol.~18, pp.~261--277, 2009.

\bibitem{RegoVieira2022}
L.~C. Rêgo and G.~I.~A. Vieira,
``Interactive unawareness in the graph model for multilateral conflicts,''
\emph{Discrete Appl. Math.}, vol.~318, pp.~31--46, 2022.

\bibitem{WHF1989}
M.~Wang, K.~W. Hipel, and N.~M. Fraser,
``Solution concepts in hypergames,''
\emph{Appl. Math. Comput.}, vol.~34, no.~3, pp.~147--171, 1989.

\bibitem{Wolpin2015}
S.~Wolpin,
``The history of DVD,'' 2015 (reprint of an article in \emph{DVD Etc.},
2007), \url{https://stewartwolpin.com/the-history-of-dvd/}.

\bibitem{Xu2010}
H.~Xu, D.~M. Kilgour, K.~W. Hipel, and G.~Kemkes,
``Using matrices to link conflict evolution and resolution in a graph
model,'' \emph{Eur. J. Oper. Res.}, vol.~207, no.~1, pp.~318--329, 2010.

\bibitem{Zhang2025}
H.~Zhang, R.~Zhang, J.~Liu, and Y.~Zhu,
``Strategic manipulation behavior in graph model for conflict resolution:
The role of social trust network,''
\emph{Inf. Sci.}, vol.~686, art.~121313, 2025,
doi:10.1016/j.ins.2024.121313.

\bibitem{Apple1995}
Apple Computer,
``Requirements for future high-capacity compact-disc format announced by
computer industry technical group,'' press release, 3 May 1995. Archived at
\url{https://web.archive.org/web/19981202113012/http://product.info.apple.com/pr/press.releases/1995/q3/950503.pr.rel.cd.html}.

\bibitem{Doyle1996}
B.~Doyle,
``A Digital Video Disc standard for the 21st century,'' Desktop Video
Group, undated [c.~1996], \url{https://www.dtvgroup.com/DigVideo/DVD/DVDstory.html}.

\bibitem{LAT1995}
A.~Harmon and D.~Holley,
``Agreement reached on a new format for video: Technology: Toshiba and
Time Warner end battle with Sony and Philips on disc for use in devices to
replace VCRs,'' \emph{Los Angeles Times}, 16 September 1995.

\bibitem{Nokia1995}
Nokia,
``Nokia welcomes single standard for next generation high density optical
disc format,'' press release, 26 September 1995. Archived at
\url{https://web.archive.org/web/19961220013314/http://www.nokia.com/news/news_htmls/nokia_950926.html}.

\bibitem{Taylor1998}
J.~Taylor,
\emph{DVD Demystified: The Guidebook for DVD-Video and DVD-ROM}.
New York: McGraw-Hill, 1998.

\bibitem{Toshiba1995}
Toshiba Corporation,
``DVD format unification,'' press release, 8 December 1995,
\url{https://www.global.toshiba/ww/news/corporate/1995/12/pr0802.html}.

\bibitem{UPI1995}
United Press International,
``Sony, Toshiba agree on DVD format,'' 16 September 1995.

\bibitem{Aljefri2020}
Y.~M. Aljefri, K.~W. Hipel, and L.~Fang,
``General hypergame analysis within the graph model for conflict
resolution,'' \emph{Int. J. Syst. Sci.: Oper. Logist.}, vol.~7, no.~1, 2020,
doi:10.1080/23302674.2018.1476604.

\bibitem{Gharesifard2014}
B.~Gharesifard and J.~Cort\'es,
``Stealthy deception in hypergames under informational asymmetry,''
\emph{IEEE Trans. Syst. Man Cybern.: Syst.}, vol.~44, pp.~785--795, 2014.

\bibitem{Kovach2015}
N.~S. Kovach, A.~S. Gibson, and G.~B. Lamont,
``Hypergame theory: A model for conflict, misperception, and deception,''
\emph{Game Theory}, vol.~2015, art.~570639, 2015.

\end{thebibliography}
\end{document}